\newtheorem{theorem}{Theorem}
\newtheorem*{theorem*}{Theorem}
\newtheorem*{lemma*}{Lemma}
\newtheorem*{algorithm*}{Algorithm}
\newtheorem{proposition}{Proposition}
\newtheorem{definition}{Definition}
\newtheorem{claim}{Claim}
\newtheorem{lemma}{Lemma}
\newtheorem{corollary}{Corollary}
\newtheorem{fact}{Fact}
\def\OO{{\rm O}}
\def\Z{{\mathbb{Z}}}
\def\R{\mathbb{R}}
\def\F{\mathbb{F}}
\def\mod{\mbox{mod}}
\def\poly{{\rm poly}}
\def\log{{\rm log}}
\def\gapsvp{{\rm gapSVP}}
\def\svp{{\rm SVP}}
\def\cvp{{\rm CVP}}
\def\gdd{{\rm GDD}}
\def\sivp{{\rm SIVP}}
\def\sis{{\rm SIS}}
\def\modn{(\mod \ N)}
\newcommand{\be}{\begin{eqnarray}}
\newcommand{\ee}{\end{eqnarray}}
\newcommand\round[1]{{\lfloor #1 \rceil}}
\def\P{{\sf{P}}}
\def\dist{{\rm dist}}
\def\rsis{\rm rSIS}
\newcommand{\eps}{\varepsilon}
\renewcommand{\epsilon}{\varepsilon}
\begin{document}

\title{The Systematic Normal Form of Lattices}

\author{Lior Eldar\thanks{Center for Theoretical physics, MIT} \and
Peter W. Shor\thanks{Department of Mathematics and Center for Theoretical physics, MIT}
}

\maketitle

\abstract{
We introduce a new canonical form of lattices called the {\it systematic normal form} (SNF).
We show that for every lattice there is an efficiently computable "nearby" SNF lattice, 
such that for any lattice one can solve lattice problems on its "nearby" SNF lattice, and translate
the solutions back efficiently to the original lattice.
The SNF provides direct connections between arbitrary lattices, and various lattice related problems like the Shortest-Integer-Solution,
Approximate Greatest Common Divisor.
As our main application of SNF we derive a new set of worst-to-average case lattice
reductions that deviate significantly from the template of Ajtai \cite{Ajt04}
and improve upon previous reductions in terms of simplicity.
\section{Introduction}

\subsection{The Systematic Normal Form : a "smoothed analysis" of lattices}

We introduce a certain canonical form of lattices called the systematic normal form (or SNF for short).
An SNF matrix is a matrix of this form:
\be
B = \left[\begin{array}{ccccc} 
N & b_{2} & b_{3} & \ldots & b_{n} \\
& 1 & & &\\
& & 1& &  \\
& & & \ddots & \\
& & & & 1
\end{array}
\right]
\ee
where all unspecified entries are $0$, and $N$ is a prime number.
The SNF form allows for a "smoothed analysis" of lattices in the following sense:
Given a matrix $B$ in Hermite normal form (HNF) one can efficiently generate a matrix $B_{SNF}$ in systematic normal form 
such that any solution of a problem of interest on $B_{SNF}$ can be translated easily to a solution for the original problem on $B$,
albeit with a slight degradation in accuracy.
Furthermore, $B_{SNF}$ is in a specialized canonical form with several important properties which are used in the reduction.
Hence, the reduction maps each lattice to a "nearby" lattice with the "right" properties.

The defining property of SNF, is that it maps the quotient structure $\Z^n / L$ as a quotient of infinite groups,
to a quotient  of finite groups $\F_N^n / L_N$, where $L_N = L \cap\F_N^n$ and $N = \det(B)$.
In addition, the very sparse structure of SNF allows for several other important properties among which we name:
\begin{enumerate}
\item
The dual to $L_N$ in $\F_N^n$, denoted by $(NL^*)_N$ also has a very structured matrix.
\be
B = \left[\begin{array}{ccccc} 
1 & &&& \\
-b_{2} & 1& & &\\
-b_{3} & & 1 & &  \\
\vdots & & & \ddots & \\
-b_{n} & & & & 1 
\end{array}
\right]
\ee
The duality between these two matrix forms has intriguing connections to well-known problems in cryptography:
On one hand, the primal SNF form is shown here to be synonymous with the problem finding short-integer-solutions $\sis$ (see Definition \ref{def:sis}, and connection in Proposition \ref{prop:equiv}),
whereas solving the closest vector problem on the dual lattice is very similar to the problem of approximate common divisor problem (\cite{HG01}). 
\item
It is straightforward to sample uniformly from both $L_N$ and $(NL^*)_N$:
to sample uniformly from $L_N$ one samples $n-1$ coordinates $x_n,\hdots, x_2$ uniformly at random from $\F_N$, and 
then sets $x_1$ as $x_1 = \sum_{i=2}^m b_i x_i \modn$.
For the dual lattice: one samples $a$ uniformly at random from $\F_N$, and then sets the dual vector as $(a,-b_2 \cdot a\modn, \hdots, -b_n \cdot a \modn)$.
\item
For every $x\in {\cal P}(L)$, i.e. the basic parallelotope of $L$, there exists a unique element in $(NL^*)_N$ that corresponds to $x$.
Furthermore, this element can be computed efficiently from $x$.
This is shown in Claim \ref{cl:dual}.
\item
The SNF form can be naturally associated with a rank-$1$ instance of $\sis$ modulo a prime number field:
Given a vector $g\in \F_N^m$ we define:
$$
L(g)_1^{\perp}:= \{ h\in \F_N^{m+1}, \langle (1,g), h \rangle =0 \in F_N\},
$$
One can then show that any SNF matrix $B$ has $L(B) = L(g)_1^{\perp}$ where $g$ is the vector of non-trivial entries in $B$, i.e. $b_2,\hdots, b_n$.
This is the subject of Proposition \ref{prop:equiv}.
\end{enumerate}

The formal connection between a given lattice basis $B$ and its nearby SNF matrix is given in Lemma \ref{lem:snf} which we sketch here:
\begin{lemma*}(sketch)
Let $B$ denote some integer matrix.
There exists an efficient algorithm that computes an SNF matrix $B_{SNF}$ and a number $T \leq 2^{\poly(n)}$ such that
$$\left\| B - \frac{1}{T} B_{SNF}\right\| = 2^{-\tilde{\Omega}(n)},$$ 
and for any $v\in L(B_{SNF})$, the vector $\hat{v} = B B_{SNF}^{-1} v$ has $\|v/T - \hat{v}\| = \OO(n^{-k})$,
for some fixed $k$.  
\end{lemma*}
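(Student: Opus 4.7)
The plan is to construct $B_{SNF}$ and the scaling $T$ in two stages: first preprocess so that $B$ has a basis in the SNF template shape (possibly after replacing the lattice by an exponentially-close one), then round the entries to integers and adjust the $(1,1)$ entry to be prime.

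For the preprocessing, I would compute the Hermite normal form $H$ of $B$ and apply further unimodular column operations to minimize the off-diagonal entries below the first row. If $\Z^n/L(B)$ is cyclic, this already yields a basis in the SNF template shape, i.e., with $h_{ii} = 1$ for $i \geq 2$ and off-diagonal entries appearing only in the first row. For the general case, the quotient is a direct sum of cyclic groups, and the approach is to perturb the lattice: I would show that among all lattices $L'$ with cyclic prime-order quotient, some have a basis within exponentially small matrix-norm distance of $B$. The density of primes in short intervals combined with an averaging argument over the choice of the off-diagonal entries in the SNF template should provide such a nearby $L'$.

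Given the preprocessed cyclic-quotient basis $H$, I would then choose $T = 2^{\poly(n)}$ and define $B_{SNF}$ by setting its $(i,i)$ entry to $1$ for $i \geq 2$, its first-row entries to $\round{T h_{1j}}$, and its $(1,1)$ entry to the nearest prime $N$ to $T \cdot h_{11}$. The entrywise discrepancy $B - \frac{1}{T} B_{SNF}$ then consists of rounding errors of magnitude $1/T$ in the first row plus a prime-adjustment error of magnitude $N^{\alpha}/T$ (for some $\alpha < 1$, via known prime-gap bounds) in the $(1,1)$ slot; both are $2^{-\tilde{\Omega}(n)}$ for $T = 2^{\poly(n)}$. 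For the second property, I would write $\hat{v} - v/T = (B - \frac{1}{T} B_{SNF}) B_{SNF}^{-1} v$, so that $\|v/T - \hat{v}\| \leq \|B - \frac{1}{T} B_{SNF}\| \cdot \|B_{SNF}^{-1}\| \cdot \|v\|$; the sparse form of $B_{SNF}^{-1}$ (as displayed dually earlier in the excerpt) gives $\|B_{SNF}^{-1}\|$ polynomially bounded after normalization, yielding the claimed $O(n^{-k})$. The principal obstacle is the preprocessing for non-cyclic quotients: showing that an SNF-compatible lattice sits within $2^{-\tilde{\Omega}(n)}$ of any given $L(B)$ requires a delicate density argument over the choice of the prime modulus and the cross entries.
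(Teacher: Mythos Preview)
Your proposal has a genuine gap at exactly the point you flag as the principal obstacle: the non-cyclic case. You propose to ``perturb the lattice'' to a nearby one with cyclic prime-order quotient via a density/averaging argument over the SNF template entries and the prime modulus, but you give no mechanism for this, and it is not clear such an argument can be made constructive or efficient. This is the heart of the lemma, and without it the construction does not go through.

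The paper's actual construction bypasses the cyclic/non-cyclic dichotomy entirely with a concrete trick you are missing: starting from an upper-triangular LLL-reduced basis $B_1$, one adds $1/T$ along the \emph{subdiagonal} (and rounds all entries to multiples of $1/T$). The point is that once there is a nonzero entry $1/T$ in position $(i+1,i)$, integer column operations (subtracting $T b'_{i+1,j}$ times column $i$ from column $j$) can zero out the entire $(i+1)$-st row to the right of the subdiagonal. Doing this for $i=1,\ldots,n-1$ reduces rows $2,\ldots,n$ to contain only the subdiagonal $1/T$, yielding (after a cyclic column shift and scaling by $T$) the SNF shape for \emph{any} starting lattice. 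The unimodular change-of-basis matrix $M$ is explicit and upper-triangular with unit diagonal. Only then does one adjust the single entry $T b''_{1,n}$ to the nearest prime.

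A secondary issue: you preprocess with HNF, but the paper preprocesses with LLL. This matters for the second claim of the lemma. Your operator-norm argument $\|\hat v - v/T\| \le \|B - \tfrac{1}{T}B_{SNF}\|\cdot\|B_{SNF}^{-1}\|\cdot\|v\|$ is fine in principle, but the paper instead controls $\|\hat v - v/T\|$ via a bound on the \emph{coefficient vector} of $v$ in the perturbed basis (Propositions~\ref{prop:lll1} and~\ref{prop:close}), and that bound uses the LLL property $R_{1,1}\le 2^{O(n)}R_{n,n}$ together with $\lambda_1(L)$. An HNF basis gives no such control.
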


\subsection{Worst-to-average case lattice reductions}

One of the most attractive features of lattice-based crypto-systems is the ability to reduce 
worst-case instances of such problems to average instances with a small loss in approximation error,
thus potentially paving the way to prove that they are one-way functions under standard complexity-theoretic assumptions,
namely ${\rm NP} \subsetneq {\rm BPP}$.

Usually, when speaking of worst-to-average case reductions on lattices, one thinks of a theorem
of the following form, as stated in the seminal work of Ajtai \cite{Ajt04}
\begin{theorem*}(Ajtai)
Let $A \in Z_q^{m \times n}$ be some random matrix, where $q, m = \poly(n)$, 
and consider $\Lambda(A)^{\perp} = \{x\in \F_N^n, A x = 0 (\mod \ q)\} \subseteq \Z^n$.
If ${\cal A}$ is an algorithm that finds a vector $v\in \Lambda(A)^{\perp}$, $\|v\| \leq \beta<< q$ with probability at least, say $1/n$, then 
there exists an algorithm ${\cal B}$ such that ${\cal B}^{\cal A}$ can solve $\gapsvp_{\gamma}$  
for $\gamma = \beta \poly(n)$, for any instance,
where $\svp_{\gamma}$ is the problem of finding a lattice vector which is at most $\gamma$ longer than the shortest one.
\end{theorem*}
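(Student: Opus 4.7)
My plan is to follow the classical template of Ajtai, reducing $\gapsvp_{\gamma}$ to $\sis$ via discrete Gaussian sampling and iterative basis shortening. Given a lattice $L$ with basis $B$ and a guess $d$ of $\lambda_1(L)$, the goal is to either produce a lattice vector of length at most $\gamma d / \poly(n)$, certifying the \Yes\ instance, or to report \No\ after every iteration fails.

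The core reduction step proceeds as follows. Fix a parameter $s$ slightly above the smoothing parameter $\eta_{\epsilon}(L)$, sample $m = \poly(n)$ discrete Gaussian lattice vectors $v_1,\ldots,v_m \in L$ with parameter $s$, and form a matrix $A \in \F_q^{n \times m}$ whose columns $a_i$ are obtained by expressing $v_i$ in a fixed basis of $L$ and reducing coefficients modulo $q$. By the smoothing lemma, the $a_i$'s are within exponentially small statistical distance of independent uniform samples in $\F_q^n$, so $A$ is a valid average-case $\sis$ instance. Invoke $\calA$ on $A$ to obtain, with probability at least $1/n$, a short $x \in \Z^m$ with $\|x\| \leq \beta$ and $\sum_i x_i a_i \equiv 0 \pmod{q}$. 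Then $w = \sum_i x_i v_i$ lies in $qL$, so $y = w/q \in L$ is a lattice vector of norm at most $s\beta\sqrt{m}/q$, which is substantially shorter than the typical Gaussian draw of length $s\sqrt{n}$ whenever $\beta \ll q$.

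Two subtleties need care. First, $y$ must be nonzero with noticeable probability; this is handled by the standard entropy argument showing that, conditioned on the released residues $a_i$, the unknown representatives $v_i$ are still discrete Gaussian on the cosets $v_i + qL$, so any fixed short $x$ annihilates $\sum_i x_i v_i$ with only exponentially small probability. Second, the reduction must be iterated: starting from an LLL-reduced basis of length $2^{O(n)} \lambda_n(L)$, each call roughly halves the current shortest known length, and by decreasing $s$ adaptively after each round one reaches a $\poly(n) \cdot \lambda_1(L)$ approximation in polynomially many rounds. Amplifying $\calA$'s success probability from $1/n$ to a constant is done by independent repetition on fresh Gaussian samples.

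The main obstacle will be the smoothing/uniformity step: in order to apply $\calA$ to a distribution statistically close to uniform, one needs sharp tail bounds on the discrete Gaussian together with a sufficiently good starting basis, since $\eta_{\epsilon}(L)$ is controlled by $\lambda_n(L)$. Ajtai's original argument, later refined by Micciancio and Regev via the dual lattice, controls precisely this quantity, and the final approximation factor $\gamma = \beta \cdot \poly(n)$ emerges from the ratio $s/\lambda_1(L)$. Everything else is bookkeeping.
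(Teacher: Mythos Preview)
The theorem you are proving is not actually proved in this paper: it is stated in the introduction as Ajtai's seminal result \cite{Ajt04}, cited without proof, purely to set the stage for the paper's own (different) contribution. So there is no ``paper's own proof'' to compare against.

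That said, your sketch is a faithful outline of the Ajtai template as streamlined by Micciancio--Regev \cite{MR07}: sample discrete Gaussians above the smoothing parameter, reduce their coefficient vectors mod $q$ to get a near-uniform $\sis$ instance, feed it to $\calA$, divide the resulting combination by $q$, argue nonzeroness via conditional entropy, and iterate to drive the basis length down. That is exactly the ``prior art'' paragraph the paper summarizes after the theorem statement.

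What the paper itself does is deliberately \emph{not} this. Its reductions (Theorems 1 and 2) go from $\gdd$/$\sivp$ on an arbitrary lattice to rank-$1$ $\sis$ over a large prime $N$, via the SNF transformation. The mechanism is different: instead of reducing Gaussian coefficient vectors mod $q$ and iterating, the paper uses the SNF bijection $\Phi_3$ between $\F_N^n$ and the scaled dual $(NL^*)_N$ to turn each Gaussian sample $x_i$ into a single scalar $a_i\in\F_N$, calls the $\sis$ oracle once on $(a_1,\ldots,a_m)$, and recombines. The scheme is non-adaptive (a single oracle round rather than iterative shortening), and the ``affine'' issue $\sum_i\alpha_i=c$ is handled by guessing $c$ and randomizing the target with a uniform lattice shift $u_{rand}$, rather than by the entropy/nonzeroness argument you invoke. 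So your proposal is correct for the cited theorem, but it is precisely the template from which the paper's own results depart.
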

The problem of finding a short vector in $\Lambda(A)^{\perp}$ is known as the Short-Integer-Solution or $\sis$,
and the statement above means that solving $\sis$ on average is at least as hard as approximating $\svp$
for suitable choice of parameters.

This line of research, pioneered by Ajtai \cite{Ajt04}, was later improved in the subsequent works of 
Micciancio and Regev \cite{MR07}, Gentry, Peikert and Vaikuntanathan \cite{GPV08}, 
and Micciancio and Peikert \cite{MP14}.
Specifically, in \cite{MR07}, the authors introduced the use of Gaussian measures, and harmonic (Fourier) analysis on the lattice
as a means to simply the reduction, and achieve better parameters.
Most prominently, they reduced the loss in accuracy: i.e. the ratio $\gamma/\beta$, to $\OO(n \sqrt{\log(n)})$, using a sequence of adaptive reductions.
 
Essentially, all prior reductions follow the same path, originally due to Ajtai.
At a very high level, they
amount to querying the $\sis$ oracle in order to iteratively improve (decrease the length) of a set of linearly independent
lattice vectors.
The oracle is applied to the vector of coefficients of the lattice vectors from previous rounds, 
which are then shown to be random (modulo $\Z_q^m$), so long as these vectors are of length which is comparable to the smoothing parameter of the lattice (see Definition \ref{def:smooth}).

Using the SNF structure, we propose a new worst-to-average case reduction that diverges from the template reduction due to Ajtai.
Specifically, we show a simplified scheme that reduces known lattices problem such as Guaranteed-Distance-Decoding ($\gdd$), and Shortest-Independent-Vector-Problem ($\sivp$) to
$\sis$.

Denote the smoothing parameter of a lattice by $\eta_{\eps}(L)$ (see Definition \ref{def:smooth}).
We consider the problem of Guaranteed-distance-decoding ($\gdd$) where we are given a lattice $L = L(B)$, and a target vector $v$, and asked to find a vector $x\in L$, such that $\dist(v,x) \leq \eta_{\eps}(L)$.
We also consider rank-$1$ instances of the Short-Integer-Solution problem as follows:
Fix some prime number $N$.
Given $g = (g_1,\hdots, g_n)$, 
where $g_i\in \F_N$
one can then define a lattice constrained by a "parity" check matrix,
as follows:
$$
\Lambda(g)^{\perp} = \{ h\in \F_N^n, \ \ \langle h, (g_1,\hdots, g_n) \rangle = 0 \in \F_N^n \}.
$$
The $\sis(N)$ problem (see Definition \ref{def:sis}), is then to find short vectors in $\Lambda(g)^{\perp}$, i.e. of length, say $\OO(n)$.
We note that while formally the ensemble above is a proper $\sis$ ensemble, 
for cryptographic applications
one usually uses a completely different range of parameters: the parity check matrix has rank $m = \poly(n)$ (and not $1$) and the field of interest is usually taken as $q = \poly(n)$, which is related to the encryption complexity.

We show the following theorem:
\begin{theorem*}(sketch)
Let $(B,v)$ be an input to $\gdd$
where $B$ is an $n\times n$ integer matrix.
Suppose that
$\eta_{\eps}(L) \leq \Phi$ for some $\eps = 2^{-n}$.
Suppose that ${\cal A}$ returns w.p. at least $1/\poly(n)$ a solution to
$\sis$.  
Then for $x_{out} = {\cal B}(B,v)$ we have that $x_{out}\in L$ and w.p. $\Omega(1)$
$$
\|x_{out} - v\| \leq \Phi \cdot n^{1.5} \cdot \max \{\log \det(B), n\}. 
$$
\end{theorem*}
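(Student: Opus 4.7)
The plan is to (i) transport the GDD instance from $L(B)$ to the nearby SNF lattice via Lemma \ref{lem:snf}, (ii) recognize the SNF lattice as a rank-$1$ prime $\sis(N)$ instance via Proposition \ref{prop:equiv}, (iii) use $\calA$ to assemble a short basis of $L(B_{SNF})$, and (iv) decode the target by Gaussian-smoothed Babai rounding before undoing the SNF transport.

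For the transport step I apply Lemma \ref{lem:snf} to get $B_{SNF}$ and $T\le 2^{\poly(n)}$ with $\|B - B_{SNF}/T\| = 2^{-\tilde\Omega(n)}$ and $\|v/T - B B_{SNF}^{-1}v\| = \OO(n^{-k})$ for $v\in L(B_{SNF})$. Rescaling the target to $v':=Tv$, any $x'\in L(B_{SNF})$ within distance $\delta$ of $v'$ maps to $x_{out}:=B B_{SNF}^{-1}x'\in L(B)$ with $\|x_{out}-v\|\le \delta/T + \OO(n^{-k})$, and the $n^{-k}$ term is negligible compared to the target bound. From here I work entirely with $(B_{SNF},v')$.

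By Proposition \ref{prop:equiv}, $L(B_{SNF})=\Lambda(g)^\perp$ for $g=(1,-b_2,\ldots,-b_n)\in \F_N^n$, which is precisely the rank-$1$ ensemble on which $\calA$ is defined. The SNF construction in Lemma \ref{lem:snf} is arranged so that $(b_2,\ldots,b_n)$ is statistically close to uniform in $\F_N^{n-1}$, so each call to $\calA$ returns a short $s\in L(B_{SNF})$ of length $\OO(n)$ with probability $\ge 1/\poly(n)$. Running $\calA$ $\poly(n)$ times with independent randomness yields, with constant probability, $n$ linearly independent short vectors $s_1,\ldots,s_n$; the independence follows from a pigeonhole argument showing that when $N$ is a large prime no proper sublattice of $L(B_{SNF})$ can contain more than a negligible fraction of short vectors. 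Armed with this short basis I decode $v'$ by Babai nearest-plane preceded by Gaussian smoothing of width $\OO(T\Phi)$; standard analysis of smoothed rounding with Gram--Schmidt lengths $\OO(n)$ gives total decoding error $\OO(T\Phi\cdot n^{1.5}\cdot \log N)$ on $L(B_{SNF})$. Since the SNF construction yields $\log N \le \OO(\max\{\log\det B, n\})$, dividing by $T$ in the unwinding step recovers the bound in the statement.

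The principal obstacle is justifying that the SIS instance distilled from an adversarial $B$ is statistically indistinguishable from the uniform rank-$1$ ensemble that $\calA$ expects: this is exactly the ``smoothed analysis'' promise behind SNF, and formally proving it requires careful control over the randomness injected by Lemma \ref{lem:snf} and, in all likelihood, that $N$ lie in a regime where any residual dependence of $(b_2,\ldots,b_n)$ on the worst-case input is statistically drowned out. A secondary subtlety is that a fixed SIS oracle might concentrate its output on a proper sublattice of short solutions; if so, the calls to $\calA$ must be rerandomized through the sparse symmetries that SNF exposes (e.g.\ coordinate permutations on indices $2,\ldots,n$ or scaling of $g$ by units of $\F_N^\times$). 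Once these distributional issues are settled, the smoothed Babai rounding and the SNF unwinding are essentially textbook.
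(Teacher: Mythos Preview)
Your proposal diverges from the paper at the crucial step of how the $\sis$ oracle is fed a \emph{random} instance, and this is where a genuine gap lies. You propose to hand $\calA$ the vector $g=(b_2,\ldots,b_n)$ coming from the SNF matrix itself, and you hope that Lemma~\ref{lem:snf} ``is arranged so that $(b_2,\ldots,b_n)$ is statistically close to uniform''. It is not: the SNF reduction in Lemma~\ref{lem:snf} is entirely deterministic (LLL, add a subdiagonal, column-reduce, round the top-right entry to a prime). The entries $b_2,\ldots,b_n$ are fixed functions of the worst-case input $B$, so an average-case oracle has no obligation to succeed on them. The rerandomizations you suggest (permuting coordinates, multiplying $g$ by a unit of $\F_N^\times$) only move the instance inside a $\poly(n)$-size orbit of a single worst-case point; they do not come close to covering $\F_N^{n-1}$, so they cannot manufacture the uniform distribution the oracle requires.

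The paper obtains its randomness from an entirely different source. It never feeds the $b_i$'s to $\calA$. Instead it samples $m$ Gaussians $x_i\sim D_{\F_N^n,s,v_{target}}$ with $s\ge\eta_\eps(L_{SNF})$, maps each $x_i$ through the bijection $\Phi_3$ of Claim~\ref{cl:dual} to a dual vector $y_i\in (NL^*)_N$, and feeds only the \emph{first coordinates} $a_i$ of the $y_i$'s to $\calA$. By Fact~\ref{fact:1} (a direct consequence of the smoothing-parameter definition), each $a_i$ is $\eps$-close to uniform on $\F_N$ regardless of $B$, so $(a_1,\ldots,a_m)$ is a legitimate random $\sis(N)$ instance. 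The short combination $\{\alpha_i\}$ returned by $\calA$ is then applied to the lattice points $x_i+y_i\in L_N$; because $\sum_i\alpha_i a_i=0$ forces $\sum_i\alpha_i y_i=0$, the output is $\sum_i\alpha_i x_i$, which is close to $(\sum_i\alpha_i)\cdot v_{target}$. A guessed affine constant $c$ and a random lattice shift $u_{rand}$ handle the fact that $\sum_i\alpha_i$ need not equal $1$. There is no short-basis construction and no Babai rounding anywhere in the argument. Your ``principal obstacle'' is thus not a technicality to be patched but the entire content of the reduction, and the mechanism that resolves it is the Gaussian-to-dual map $\Phi_3$, which your plan does not use.
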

A similar theorem (and algorithm) can be shown for reducing the problem of Shortest-Independent-Vector-Problem ($\sivp$) to $\sis$.
Following is a sketch of the algorithm:
\begin{algorithm*}(sketch)
\item
Input: A lattice $L = L(B)$ for some $n\times n$ HNF matrix $B$, and vector $v\in \F_N^n$.
\begin{enumerate}
\item 
Reduce $L$ to SNF form.
Denote matrix basis by $B_{SNF}$, matrix $M$, and constant $T$.
Denote $L_{SNF} = L(B_{SNF})$,  $N = \det(L_{SNF})$, $L_N = L_{SNF} \cap [N]^n$,  $NL^* = L(N \cdot B_{SNF}^{-T})$.
\item
Put 
$m$ as the minimal positive integer for which $\sqrt{m}^m \geq N$.
Set $s = T \cdot \eta_{\eps}(L)$ for some $\eps = \OO(1)$.
\item
Choose $c$ uniformly at random from $\F_N \cap [- m\sqrt{m}, m\sqrt{m}]$.
Choose $u_{rand}\in L_N$ uniformly at random from $L_N$.
Put $v_{target} = (c^{-1}\cdot T \cdot v + u_{rand}) \modn$.
\item
Repeat $m$ times:
\begin{enumerate}
\item
Sample $x_i = (x_{i,1},\hdots, x_{i,n})\sim \rho_{\F_N^n,s,v_{target}}$.
\item
Compute $y_i$ as the point in $NL^*$ corresponding to $x_i$.
Denote by $a_i$ the first coordinate of $y_i$.
\end{enumerate}
\item
Put $\{\alpha_i\}_{i=1}^{m} = {\cal A}(a_1,\hdots, a_m)$.
\item
If ${\cal A}$ fails or $\sum_{i=1}^m \alpha_i \neq c$ return FAIL.
\item
Compute $x_0 = (\sum_{i=1}^m \alpha_i (y_i + x_i) - c u_{rand})\modn$.
\item
Return $x_{out} := B (B_{SNF}^{-1}M^{-1} \cdot x_0) $.
\end{enumerate}
\end{algorithm*}

The algorithm, given target vector $t$, tries to find a "relatively" close lattice vector to $t$.
This is, in a sense, the same problem as $\cvp$ (the closest-vector problem), except here we are only interested in an approximation which
is comparable to the "smoothing parameter" of the lattice (see Definitions \ref{def:gdd} and \ref{def:cvp} for a formal statement).
Using Lemma \ref{lem:snf}, which shows how to reduce an arbitrary lattice to SNF form, and then translate
back the solution to the original lattice, we will describe the algorithm assuming the input lattice is already in SNF form.

The reduction begins by sampling from a discrete Gaussian around the target vector $t$, with sufficiently large variance.
Denote these samples by $x_1,\hdots, x_n\in \F_N^n$.
By the SNF structure, the algorithm then computes for each sampled point $x_i$, a corresponding point $y_i\in (NL^*)_N$.
such that
$$
\forall i \ \ x_i + y_i \in L_N. \
$$
We do not know whether such a bijection exists for general lattices. 

The structure of the dual lattice implies that the points $y_i$ are completely characterized by their first coordinate.
We then show that for sufficiently large Gaussian variance, the first coordinate of these points
is uniformly distributed on $\F_N^n$.
This implies that, given as input the first coordinate $a_i$ of each $y_i$, the random oracle for SIS will succeed with high probability.
Suppose that the oracle to $\sis$ returns coefficients $\{\alpha_i\}_{i=1}^m$
such that $\sum_i \alpha_i a_i = 0 \in \F_N$, and consider the linear combination of the vectors $x_i$ using these coefficients:
$$
x_0 = \sum_{i=1}^m \alpha_i (x_i+y_i)
$$
We note that the above is a lattice point in $L_N$ because it is an integer combination of vectors in $L_N$.
By the SNF structure this implies that $\sum_i \alpha_i y_i = 0$.
Hence, the distance of $x_0$ to the target $c$ is given by the vector:
$$
\sum_{i=1}^m \alpha_i x_i = \sum_i \alpha_i t + \sum_{i=1}^m \alpha_i {\cal E}_i,
$$
where ${\cal E}_i$ is a discrete Gaussian on $\F_N^n$ centered around $0$ with the same variance as $x_i$.
Since the $\alpha_i$'s are small (say $\sqrt{n}$) and the variance of ${\cal E}_i$ is $s^2$, 
then to upper bound the length of $x_0-t$ it is sufficient to make sure that the linear combination of $\alpha_i$'s is in fact {\it affine}, i.e.
$\sum_{i=1}^m \alpha_i = 1$.

It is not immediately clear why a short-integer-solution to a random instance should be affine.
We do know however, that although it may not be affine, the sum of coefficients is quite small, say $\OO(n\sqrt{n})$.
Thus, 
a natural scheme would be to "guess"
the sum of coefficients $c = \sum_{i=1}^m \alpha_i$ in advance out of a small interval, and then sample $x_i$'s from 
a Gaussian centered around a scaled target - $\rho_{\F_N^n, s, c^{-1} t}$.
For such a scheme to work, we need to rule out the possibility that since we change the original samples to be centered around $c^{-1} t$, the oracle to $\sis$, despite returning a good answer for almost all inputs, will "adversarially" pick
an answer $(\alpha_1,\hdots, \alpha_m)$ such that $\sum_{i=1}^m \alpha_i \neq c$ {\it always}.

To rule out this possibility, we introduce extra randomization to make sure that the random oracle has no knowledge of which target vector we are looking for.
We add a random lattice vector to $t$, for which we compensate at the end, and then argue that the random oracle can only gain information about the value of $t$ modulo the basic parallelotope ${\cal P}(L)$.  But since the $x_i$'s are sampled with variance above the smoothing parameter, it implies that modulo ${\cal P}(L)$ the values of the $x_i$'s are almost completely uniform.

\subsection{Discussion and prior art}

The above theorem improves on prior art, and specifically \cite{MR07} in the following ways:
\begin{enumerate}
\item
The reduction is direct:  in \cite{MR07} the reduction from $\gdd$ to $\sis$ goes through the problem of finding a short linearly independent set called
${\rm SIVP}$, and through a variant of $\gdd$ called ${\rm INC-GDD}$.
Hence our reduction is somewhat simpler to follow.
\item
It does not rely on the Ajtai form of reductions.
Hence it allows to consider other problems as hard random instances: for example, 
a linear congruence modulo a large prime number $N$, instead of simultaneous linear equations modulo a small field $q$
\cite{Ajt04}.
\item
It is inherently non-adaptive: the oracle calls to $\sis$ can be made once in parallel, following which the algorithm returns
an answer to the original problem ($\gdd$ or $\sivp$).
\end{enumerate}

By \cite{AR04} it is known that approximating $\svp,\cvp$ to a factor at most $\sqrt{n}$
is in ${\rm NP} \cap {\rm coNP}$, and by \cite{GG} and \cite{GMR} a $\sqrt{n/\log(n)}$ approximation is in ${\rm NP} \cap {\rm coAM}$.
Hence, much larger improvement is needed in order to approach the domain of NP-hardness.
In addition, since the reduction is non-adaptive, then by \cite{BT} it cannot be used to 
show a reduction from an NP complete problem to an average case problem, unless the polynomial hierarchy collapses.
This suggests, that perhaps an adaptive variant of our algorithm can result a further improvement of the approximation factor.

Finally, we point out that it is not known \cite{RS} whether $\gdd_p$ is as hard as $\svp_q$ for any polynomials $p,q$, hence
it is possible that approximate $\gdd$ (or $\sivp$) is in fact an {\it easier} problem than approximate $\svp,\cvp$.

\section{Preliminaries}

\subsection{Notation}

The $n$-dimensional Euclidean space is denoted by $\R^n$.
The Euclidean norm of a vector $x\in \R^n$ is $\|x\| = \sqrt{\sum_{i=1}^n |x_i|^2}$.
A Euclidean lattice $L$ is written as $L = L(B)$ where $B$ is some basis of $L$.
$N$ is used to denote a prime number, and $\F_N$ the prime number field corresponding to $N$.
We define $\Delta$ as the statistical distance between distributions $(p,\Omega), (q,\Omega)$,  i.e. $\Delta(p,q) = \int_{\Omega} |p(x) - q(x)| dx$.
Given a set $S$, $U(S)$ is the uniform distribution on $S$.
For any $v\in \mathbf{R}^n$ define: 
$
|v| = max_i |v_i|.
$ 
For real number $s>0$ and vector $c\in R^n$, $\bar{B}_s(c)$ is the closed Euclidean ball of radius $s$ around $c$.
For integer $n\geq 1$, the notation $[n]$ stands for the set of indices $\{1,\hdots, n\}$.
Given a set $S\subseteq \R^n$, and a vector $v\in \R^n$, we denote $\dist(v,S): = \min_{x\in S} \|v - x\|$.
For a positive integer $M$, we denote by $[M]$ as the interval of integers $[0,\hdots, M-1]$.

We say that a problem $P$ is efficiently computable if there exists an algorithm that runs in time $\poly(n)$,
where $n$ is the size of the description of an instance of $P$.

\subsection{Background on lattices}
We start by stating some standard facts about lattices.

\begin{definition}

\textbf{Euclidean Lattice}

\noindent
A Euclidean lattice $L\subseteq \R^n$ is the set of all integer linear combinations of a set of linearly independent vectors
$b_1,\hdots, b_m$:
$$
L = \left\{ \sum_{i=1}^m z_i b_i, \ \ z_i\in \Z, \right\} \subseteq \R^n
$$
This set $\{b_i\}_{i=1}^n$ is called the {\em basis} of the lattice.
We denote by $L = L(B)$, where $B$ is the matrix whose columns are $b_1,\hdots, b_m$.
In this paper, we will always assume that $L$ is full-dimensional, i.e. $m=n$.
\end{definition}
\noindent
For lattice $L = L(B)$, ${\cal P}(B)$ is the basic parallelotope of $L$ according to ${\cal B}$:
$$
{\cal P}(B) := \left\{ v= \sum_{i\in [n]} x_i b_i, \ \ x_i\in [0,1) \right\}.
$$
Sometimes, it will be more convenient to use ${\cal P}(L)$ which is independent of the basis.
This then refers to ${\cal P}(B)$ for some basis $B$ for $L$.

%\begin{definition}
%
%\textbf{Hermite Normal Form}
%
%\noindent
%A full-rank square matrix $B$ is said to be in Hermite Normal Form (HNF) if
%\begin{enumerate}
%\item
%$b_{i,j}=0$ for all $i>j$.
%\item
%$0 \leq b_{i,j} < b_{i,i}$ for $i<j$.
%\end{enumerate}
%\end{definition}
%The Hermite normal form of an integer lattice can be computed in polynomial
%time for any lattice.   
%
%For $L = L(B)$ where $B$ is HNF, we define the determinant of the lattice as
%\be 
%N = \prod_{i=1}^n b_{i,i} = \det B. 
%\ee
\begin{definition}

\textbf{The Dual Lattice}

\noindent
The dual of a lattice is the lattice generated by
the columns of $B^{-T}$. 
\end{definition}
%It is easy to see that the matrix $B^\perp$ is in lower triangular form. In
%particular, the dual of our lattice in Hermite normal form is the column
%space of 
%a matrix $B^\perp$ that looks
%like 
%\be
%B^\perp = \left[\begin{array}{ccccc} 
%b_{1,1}^{-1} & & & & \\
%* & b_{2,2}^{-1} & & & \\
%* & * & b_{3,3}^{-1} & & \\
%\vdots & & & \ddots & \\
%* & * &*  &\ldots  & b_{n,n}^{-1} 
%\end{array}
%\right]
%\ee
%where $*$ represents a possibly non-zero entry and the blanks represent
%zeroes. Every entry of $B^\perp$ is a fraction with denominator $D$. 
%If we define $u_i$ and $v_j$ to be the $i$th and $j$th columns of $B$ and
%$B^\perp$, we have $\langle u_i,v_j \ra = \delta_{i,j}$, because 
%$(B^\perp)^T B = (B^T B)^{-1} B^T B = I$. 

\begin{definition}

\textbf{Successive minima of a lattice}

\noindent
Given a lattice $L$ of rank $n$, its successive minima $\lambda_i(L)$ for all $i\in [n]$ are defined as follows:
$$
\lambda_i(L) = 
\inf 
\left\{ 
r | \dim( {\rm span} (L \cap \bar{B}_r(0)) ) \geq i 
\right\}.
$$
\end{definition}

\begin{definition}

\textbf{Unimodular matrix}

\noindent
The group of unimodular matrices $GL_n(\Z)$ is the set of $n\times n$
integer matrices with determinant $1$.
Unimodular matrices preserve a lattice: $L(B) = L(B')$ if and only if $B = B' \cdot A$, for some unimodular matrix $A$.
\end{definition}

\begin{definition}

\textbf{The determinant of a lattice}

\noindent
For a lattice $L = L(B)$ we define $\det(L) = \det(B)$, and denote by $N$.
\end{definition}
The determinant of a lattice is well-defined, since if $L(B') = L(B)$, then by the above $B = B' \cdot A$
for some unimodular matrix $A$, in which case $\det(B') = \det(B) \det(A) = \det(B)$.

The lattice $L$ is periodic modulo $N$. In other words, if we add $N$ to 
any coordinate of a lattice point, we reach another
lattice point. Thus, a cube of side length $N$ gives a subset of the lattice
which generates the whole lattice when acted on by translations by $N$ in
any direction. We let $L_N$ denote the lattice restricted to a cube of side
length $N$.  

In particular, if $L = L(B)$ is an integer lattice, with $\det(L) = N$, for prime $N$, this implies that $L_N$ is a lattice of $\F_N^n$:
\begin{proposition}
Let $\F_N^n$ denote the additive group of $n$-dimensional vectors of integers, where in each coordinate summation is carried out modulo $N$.
Then $L_N$ is an additive sub-group of $\F_N^n$, that contains the $0$ point.
In particular $L_N$ is a lattice of $\F_N^n$.
\end{proposition}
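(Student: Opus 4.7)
The plan is to verify the two group axioms directly, reducing everything to the single structural fact that $N \Z^n \subseteq L$ when $L \subseteq \Z^n$ is an integer lattice with $\det(L) = N$.

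First I would establish the key inclusion $N \Z^n \subseteq L$. Since $L$ is assumed to be a full-rank integer lattice, $L$ is a subgroup of $\Z^n$, and the quotient $\Z^n / L$ is a finite abelian group whose order equals $|\det(L)| = N$. By Lagrange's theorem the exponent of $\Z^n / L$ divides its order, so for every $v \in \Z^n$ we have $Nv \equiv 0$ in $\Z^n / L$, i.e.\ $Nv \in L$. In particular $N e_i \in L$ for each standard basis vector $e_i$, which is precisely the "periodicity modulo $N$" asserted in the paragraph preceding the proposition.

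Next I would verify the subgroup axioms for $L_N := L \cap [N]^n$ inside $\F_N^n$. The zero vector lies in both $L$ and $[N]^n$, so $0 \in L_N$. For closure under addition mod $N$, take $x, y \in L_N$; then $x+y \in L$ as an ordinary integer sum, and $(x+y) \bmod N$ differs from $x+y$ by a vector in $N\Z^n$, which is contained in $L$ by the first step. Hence $(x+y) \bmod N$ is an element of $L$ lying in $[N]^n$, i.e.\ in $L_N$. The same reasoning handles inverses: $-x \in L$, and reducing each coordinate mod $N$ adjusts by an element of $N\Z^n \subseteq L$, keeping us inside $L \cap [N]^n$.

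Combining these shows $L_N$ is an additive subgroup of $\F_N^n$ containing $0$, which is the statement of the proposition. The only nontrivial ingredient is the inclusion $N\Z^n \subseteq L$, and even that follows at once from the elementary group-theoretic fact that a finite abelian group of order $N$ is annihilated by $N$; no specific structure of the basis $B$ is needed.
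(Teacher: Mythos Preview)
Your proposal is correct and follows essentially the same reasoning the paper relies on: the paper states the proposition without proof, having just asserted in the preceding paragraph that $L$ is periodic modulo $N$ (i.e.\ $N\Z^n\subseteq L$), and you simply supply the missing details---deriving that inclusion from $|\Z^n/L|=N$ via Lagrange and then checking the subgroup axioms directly. There is nothing to add.
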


\noindent
A canonical representation of integer lattices is called the Hermite normal form (HNF):
\begin{definition}

\textbf{Hermite Normal Form}

\noindent
An integer matrix $A\in \Z^{n\times n}$ is said to be in Hermite normal form (HNF) if $A$ is 
upper-triangular, and $a_{i,i}> a_{i,j}\geq 0$ for all $j>i$, and all $i\in [n]$.
\end{definition}

\noindent
It is well-known that every integer matrix can be efficiently transformed into HNF:
\begin{fact}

\textbf{Unique, efficiently-computable, Hermite normal form}

\noindent
For every full-rank integer matrix $A\in Z^{n\times n}$, there exists a unique unimodular
matrix $U\in GL_n(\Z)$, such that $H = U \cdot A$, and $H$ is HNF.
$U$ can be computed efficiently.
\end{fact}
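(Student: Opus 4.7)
The plan is to establish existence, uniqueness, and efficient computability separately, with the last being the real technical content.

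For existence, I would process the matrix column by column from left to right using integer row operations (left multiplications by elementary unimodular matrices). For column $1$, iteratively apply B\'ezout-type moves on pairs of rows with nonzero entries in column $1$: given $g = \gcd(a_{i,1}, a_{j,1}) = u\, a_{i,1} + v\, a_{j,1}$, the $2\times 2$ block $\bigl(\begin{smallmatrix} u & v \\ -a_{j,1}/g & a_{i,1}/g \end{smallmatrix}\bigr)$ is unimodular and zeros out one of the two entries. After finitely many such moves, column $1$ has a single nonzero entry at position $(1,1)$, equal up to sign to the gcd of the original column; then recurse on the trailing $(n-1)\times(n-1)$ submatrix to obtain upper-triangular form with positive diagonal. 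A final sweep reduces each off-diagonal entry modulo the appropriate pivot via $r_i \leftarrow r_i - \alpha\, r_k$ to enforce the inequality constraint in the HNF definition. Full-rank of $A$ guarantees the diagonal pivots are nonzero, and accumulating the row operations yields $U\in GL_n(\Z)$ with $U A = H$.

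For uniqueness, suppose $U_1 A = H_1$ and $U_2 A = H_2$ are both HNF. Set $V := U_2 U_1^{-1} \in GL_n(\Z)$, so $V H_1 = H_2$. Since both $H_i$ are invertible and upper-triangular, $V$ is upper-triangular; comparing diagonals gives $V_{i,i}\, h^{(1)}_{i,i} = h^{(2)}_{i,i}$, and the constraint $\prod_i V_{i,i} = \pm 1$ with $V_{i,i}\in\Z$ forces $V_{i,i} = \pm 1$. Positivity of HNF pivots then pins $V_{i,i} = 1$. Inducting from the bottom row upward, the strict reduction condition on off-diagonal HNF entries forces every off-diagonal entry of $V$ to vanish, giving $V = I$, hence $H_1 = H_2$ and $U_1 = U_2$.

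The hard part will be efficiency: a naive implementation of the existence construction suffers from the classical intermediate-coefficient blow-up, with bit-sizes of entries in the partially reduced matrix and in the accumulated $U$ potentially reaching $2^{\Omega(n)}$, even when the final $H$ has polynomial bit-size. I would bypass this with one of the standard remedies: either compute $d = |\det A|$ upfront in polynomial time (e.g.\ via Bareiss's fraction-free elimination) and carry out all row operations modulo $d$, exploiting the fact that columns of $H$ are uniquely determined modulo $d\Z^n$ by $\det H = d$; or invoke the Kannan--Bachem algorithm, which incrementally maintains an HNF of each leading principal submatrix and keeps intermediate entries $\poly(n, \log \|A\|_\infty)$. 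Either route yields the polynomial-time computability of both $U$ and $H$, completing the Fact.
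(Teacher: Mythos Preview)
The paper does not supply a proof of this Fact; it is stated as a well-known result from the literature and used as a black box. Your outline is the standard textbook argument---existence via B\'ezout-type row operations, uniqueness via analyzing the unimodular transition matrix $V = U_2 U_1^{-1}$, and polynomial-time computability via either the $d = |\det A|$ modular trick or the Kannan--Bachem incremental algorithm---and is correct in spirit, so there is nothing to compare against.

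One caveat worth flagging: the paper's HNF definition requires $a_{i,i} > a_{i,j} \geq 0$ for $j>i$, i.e., each diagonal entry dominates the entries to its \emph{right in the same row}. Under left-multiplication by unimodular matrices this condition does not actually pin down a unique representative: for instance $\bigl(\begin{smallmatrix} 3 & 0 \\ 0 & 1 \end{smallmatrix}\bigr)$ and $\bigl(\begin{smallmatrix} 3 & 1 \\ 0 & 1 \end{smallmatrix}\bigr)$ have the same row lattice and both satisfy the stated inequalities. The convention that does yield uniqueness is that each pivot dominates the entries \emph{above it in the same column}, and your induction step ``the strict reduction condition forces every off-diagonal entry of $V$ to vanish'' silently relies on that column-wise condition---with the row-wise condition as literally written, the step fails at exactly the example above. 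This is almost certainly a minor misstatement in the paper's definition rather than a defect in your plan; just make explicit which reduction convention you are using when you write up the uniqueness induction.
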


\subsection{Gaussians on lattices}

The use of the Gaussian measure in the context of lattices is well-known in math.
In the context of lattices, the use of the Gaussian measure has been shown in recent years to be very useful to derive 
important geometric facts about lattices \cite{Ban93}
complexity-theoretic results \cite{AR04},
the well-known Learning-with-Errors public-key crypto-system \cite{Reg09},
hard random lattices \cite{MR07} and the fastest classical algorithms for
the shortest vector problem \cite{ADR+15}.

\begin{definition}

\textbf{The discrete Gaussian measure over lattices}

\noindent
For any $s > 0$ define the Gaussian function on $\R^n$ centered at $c$ with parameter $s$:
$$
\forall x\in R^n , \ \rho_{s,c}(x)  = \exp( - \pi \|x - c\|^2 / s^2 ).
$$
For any $c\in \R^n$, real number $s>0$, and $n$-dimensional lattice $L$,
the discrete Gaussian distribution over $L$ is:
$$
\forall x\in L, \ \ D_{s,c}(x) = \frac{\rho_{s,c}(x)}{\rho_{s,c}(L)},
$$
where
$$
\rho_{s,c}(L) = \sum_{x\in L} \rho_{s,c}(x).
$$
\end{definition}

\noindent
In \cite{MR07}
Micciancio and Regev introduced a lattice quantity called the smoothing parameter:
\begin{definition}[MR07]\label{def:smooth}

\textbf{The smoothing parameter of a lattice}

\noindent
For any $n$-dimensional lattice and positive real $\eps>0$, the smoothing parameter
$\eta_{\eps}(L)$ is the smallest real number $s>0$ for which
$\rho_{1/s}(L^* - \{0\}) \leq \eps$.
\end{definition}
\noindent
We mention several important facts by Micciancio, Regev \cite{MR07}.
The first one can be regarded, in a sense
as the defining property of the smoothing parameter:
\begin{fact}\label{fact:mr}
Let $L$ be some integer lattice $L\subseteq \R^n$, $\eps\in (0,1)$.  If $s\geq \eta_{\eps}(L)$ then for all $c\in \F_N^n$
$$
\Delta \left(D_{\F_N^n,s,c} (\mod {\cal P}(L)), U({\cal P}(L)) \right)\leq \eps.
$$
\end{fact}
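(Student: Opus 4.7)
The plan is to compute the density of $D_{\F_N^n,s,c} \bmod {\cal P}(L)$ as a function on the coset space $\F_N^n/L_N$ and to use Poisson summation to show that every non-trivial Fourier coefficient is bounded by the smoothing parameter. First I would identify $\F_N^n$ with the integer points in a fundamental period of $N\Z^n$ inside $\R^n$, and use the containment $N\Z^n\subseteq L$ (which holds because $L$ is $N$-periodic) to put the cosets of $L_N$ in $\F_N^n$ in bijection with coset representatives of $L$ inside ${\cal P}(L)$. The pushforward mass at a representative $t$ is then proportional to $\rho_{s,c}(L+t)$, i.e.\ a Gaussian sum over the full integer lattice $L$.

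The core calculation is Poisson summation applied to this sum. Using that the Fourier transform of $\rho_s$ on $\R^n$ is $s^n\rho_{1/s}$, one obtains
\[
\rho_{s,c}(L+t) \;=\; \frac{s^n}{\det(L)}\sum_{w\in L^*} \rho_{1/s}(w)\, e^{2\pi i\langle w,t-c\rangle}.
\]
The $w=0$ contribution equals $s^n/\det(L)$ and is independent of $t$, while the remaining terms are bounded in absolute value by $\rho_{1/s}(L^*\setminus\{0\})\leq \eps$ by the hypothesis $s\geq \eta_\eps(L)$. The same Poisson estimate applied once more bounds the normalizer $\rho_{s,c}(\F_N^n) = s^n(1\pm O(\eps))$. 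Combining, each coset receives probability $(1\pm O(\eps))/|\F_N^n/L_N|$, and summing these pointwise deviations across the $|\F_N^n/L_N|$ cosets yields the total-variation bound $\eps$ against the uniform distribution on ${\cal P}(L)$.

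The main obstacle I expect is the discrete-versus-continuous bookkeeping: the statement concerns a \emph{discrete} Gaussian on the finite set $\F_N^n$ rather than the continuous Gaussian on $\R^n$, so Poisson summation has to be invoked along both the lattice $L$ and its sublattice $N\Z^n$, and the finite-period truncation corrections (the ``wrap-around'' error from identifying $\F_N^n$ with a single period) must be absorbed into the final $\eps$. In the regime of interest, where $N$ is exponentially large and $s \geq \eta_\eps(L)$, the truncation terms decay at least as fast as the Gaussian tail beyond the smoothing parameter, so they contribute strictly lower-order error and the clean bound $\eps$ survives. A minor secondary point is to verify that the counting of cosets $|\F_N^n/L_N|$ matches the reciprocal of the uniform density on ${\cal P}(L)$ under the identification above, which follows directly from $\det(L)=N$ and the bijection of coset representatives with integer points of ${\cal P}(L)$.
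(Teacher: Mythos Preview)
The paper does not actually prove this statement: it is presented as a \emph{fact} cited from Micciancio--Regev \cite{MR07}, introduced with ``We mention several important facts by Micciancio, Regev,'' and no argument is given. Your Poisson-summation approach is exactly the standard proof of this property of the smoothing parameter (and is essentially what \cite{MR07} does), so there is nothing in the paper itself to compare against; your sketch is correct in outline, and the discrete-versus-continuous bookkeeping you flag is indeed the only place requiring care.
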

\noindent
The second one states that the smoothing parameter can be chosen arbitrarily close to the $n$-th minima of the lattice:
\begin{fact}\label{fact:nmin}
For any lattice $L$, and $\eps = n^{-k}$ for some constant $k$, we have $\eta_{\eps}(L) = \OO(\lambda_n(L) \cdot \sqrt{\log(n)})$.
\end{fact}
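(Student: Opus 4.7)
The plan is to reduce the upper bound on $\eta_\eps(L)$, which by definition is a statement about the Gaussian mass on the dual lattice $L^*$, to a lower bound on the shortest dual vector via Banaszczyk's transference inequality, and then to control the remaining Gaussian sum by a tail estimate.

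First, I would invoke Banaszczyk's transference theorem, which asserts that for any rank-$n$ lattice $L$ and every $1 \leq i \leq n$ one has $1 \leq \lambda_i(L) \cdot \lambda_{n+1-i}(L^*) \leq n$. Specializing to $i=n$ yields $\lambda_1(L^*) \geq 1/\lambda_n(L)$, so every nonzero dual vector has length at least $1/\lambda_n(L)$. This is the key geometric ingredient that converts information about the primal minimum $\lambda_n(L)$ into a uniform lower bound on the norms appearing in the sum
\[
\rho_{1/s}(L^* \setminus \{0\}) \;=\; \sum_{x \in L^*\setminus\{0\}} e^{-\pi \|x\|^2 s^2}.
\]

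Next, I would bound this Gaussian sum by grouping the nonzero dual lattice points into shells $S_k = \{x\in L^* : k\,\lambda_1(L^*) \leq \|x\| < (k+1)\,\lambda_1(L^*)\}$ for $k \geq 1$. A packing argument (each dual lattice point sits in an empty ball of radius $\lambda_1(L^*)/2$) gives $|S_k| \leq C\,(k+1)^n$ for an absolute constant $C$. Choosing $s = C' \cdot \lambda_n(L)\sqrt{\log(n/\eps)}$ and using $\lambda_1(L^*) \cdot s \geq C'\sqrt{\log(n/\eps)}$, the Gaussian sum is dominated by $\sum_{k\geq 1} C(k+1)^n \exp(-\pi C'^2 k^2 \log(n/\eps))$, which is at most $\eps$ once $C'$ is taken large enough. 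Substituting $\eps = n^{-k}$ collapses $\sqrt{\log(n/\eps)}$ to $\OO(\sqrt{\log n})$, giving the claim.

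The main obstacle is making the constant $C'$ in front of $\sqrt{\log n}$ independent of $n$: the naive shell count $(k+1)^n$ is too loose for small $k$ and would force $C'$ to absorb an extra $\sqrt{n/\log n}$ factor. To remove this loss I would, following \cite{MR07}, replace the crude shell counting by Banaszczyk's Gaussian concentration inequality applied directly to $L^*$, which controls $\rho_{1/s}(L^*\setminus r\bar{B}_1(0))$ by an exponentially small fraction of $\rho_{1/s}(L^*)$ once $rs$ exceeds $\OO(\sqrt{n})$. Combining this tail bound with the elementary estimate $\rho_{1/s}(L^*\cap r\bar{B}_1(0))\setminus\{0\} \leq (2r\lambda_n(L)+1)^n \cdot e^{-\pi/\lambda_n(L)^2 \cdot s^2}$ on the contribution of the shortest shell (where transference again provides the length lower bound) yields the advertised bound $\eta_\eps(L) = \OO(\lambda_n(L)\sqrt{\log n})$ with absolute constants for $\eps = n^{-k}$.
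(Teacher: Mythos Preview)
The paper does not prove this fact; it is quoted without proof from \cite{MR07} (Lemma~3.3 there), so there is no in-paper argument to compare against. I will therefore evaluate your proposal on its own merits and against the actual \cite{MR07} proof.

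Your final paragraph contains a genuine gap. You correctly observe that the crude shell count $|S_k|\leq C(k+1)^n$ forces the constant in front of $\sqrt{\log n}$ to absorb an extra $\sqrt{n/\log n}$. But the proposed repair---Banaszczyk's concentration for the far part plus a packing count for the near part---does not remove this loss. To invoke Banaszczyk's tail bound $\rho_{1/s}(L^*\setminus r\bar{B})\leq 2^{-\Omega(n)}\rho_{1/s}(L^*)$ you need $rs=\Omega(\sqrt{n})$; with $s=C'\lambda_n(L)\sqrt{\log n}$ this forces $r\lambda_n(L)=\Omega(\sqrt{n/\log n})$. The packing bound on nonzero dual points inside $r\bar{B}$ is then $(1+2r\lambda_n(L))^n=(n/\log n)^{\Omega(n)}$, and multiplying by the pointwise maximum $e^{-\pi s^2/\lambda_n(L)^2}=n^{-\pi C'^2}$ still leaves a super-polynomial factor. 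You are back to needing $C'=\Omega(\sqrt{n/\log n})$, exactly the loss you set out to avoid. The underlying reason is that knowing only $\lambda_1(L^*)\geq 1/\lambda_n(L)$ is too coarse: it says nothing about how many dual points sit just above the minimum.

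The \cite{MR07} argument is different in kind and does not use Banaszczyk's concentration for this lemma. One takes linearly independent $v_1,\dots,v_n\in L$ of length at most $\lambda_n(L)$, passes to the sublattice $L'=\sum_i \Z v_i\subseteq L$ (so that $L^*\subseteq (L')^*$ and it suffices to bound $\rho_{1/s}((L')^*\setminus\{0\})$), and then uses the Gram--Schmidt orthogonalization of the dual basis of $L'$ to dominate the Gaussian sum by a product of one-dimensional theta sums:
\[
\rho_{1/s}\bigl((L')^*\bigr)\;\leq\;\prod_{i=1}^n\rho_{1/s}\!\bigl(\|\tilde{d}_i\|\,\Z\bigr),\qquad \|\tilde{d}_i\|=1/\|\tilde{v}_{\sigma(i)}\|\geq 1/\lambda_n(L).
\]
Each factor is $1+O(e^{-\pi s^2/\lambda_n(L)^2})$, so the product minus $1$ is $O\!\bigl(n\,e^{-\pi s^2/\lambda_n(L)^2}\bigr)$, which is at most $\eps$ once $s\geq \lambda_n(L)\sqrt{\ln(Cn/\eps)/\pi}$. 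The crucial difference from your approach is that the GS factorization exploits the full $n$-dimensional orthogonal structure of the dual, not merely its minimum distance.
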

\noindent
The third one - is that if we sample from the discrete Gaussian on a lattice $L$ with variance $s^2$, then typically
a returned vector will have length $s \sqrt{n}$.
\begin{fact}\label{fact:ban}
For any $n$-dimensional lattice $L$, $\eps = o(1)$, real vector $c\in \R^n$, and real number $s \geq \eta_{\eps}(L)$, we have 
$$
D_{L,s,c}(L - B_{s}(c)) = 2^{-\Omega(n)}.
$$
\end{fact}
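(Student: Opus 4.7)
The plan is to follow Banaszczyk's classical tail bound, as adapted to shifted discrete Gaussians by Micciancio and Regev. The intuition is simple: once $s \geq \eta_{\eps}(L)$, the discrete Gaussian $D_{L,s,c}$ behaves like a continuous Gaussian on $\R^n$ of parameter $s$, and such a continuous Gaussian places all but an exponentially small fraction of its mass within distance of order $s\sqrt{n}$ of its center. (I read the stated radius in the ball as $s\sqrt{n}$ rather than $s$; the concentration radius for a width-$s$ Gaussian in $n$ dimensions is $\Theta(s\sqrt{n})$, and the preceding paragraph in the text mentions precisely this length.)

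First I would rewrite the tail probability as the ratio $\rho_{s,c}(L\setminus B_R(c))/\rho_{s,c}(L)$ with $R$ of order $s\sqrt{n}$. To bound the numerator I would use the standard convexity trick: for any $u\in(0,1)$ and any $x \notin B_R(c)$ one has $e^{\pi u(\|x-c\|^2 - R^2)/s^2}\geq 1$, so multiplying every term by this factor and recognizing the result as a Gaussian of parameter $s/\sqrt{1-u}$ yields
$$
\rho_{s,c}(L\setminus B_R(c)) \;\leq\; e^{-\pi u R^2/s^2}\,\rho_{s/\sqrt{1-u},\,c}(L).
$$

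For the denominator I would invoke the two-sided Poisson-summation estimate that is exactly the defining consequence of the smoothing parameter: for any $s'\geq \eta_{\eps}(L)$ and any shift $c\in \R^n$,
$$
(1-\eps)\,\frac{(s')^n}{\det(L)} \;\leq\; \rho_{s',c}(L) \;\leq\; (1+\eps)\,\frac{(s')^n}{\det(L)}.
$$
Applied both to $\rho_{s,c}(L)$ and to $\rho_{s/\sqrt{1-u},c}(L)$ (noting $s/\sqrt{1-u}\geq s \geq \eta_{\eps}(L)$), this telescopes to
$$
\frac{\rho_{s,c}(L\setminus B_R(c))}{\rho_{s,c}(L)} \;\leq\; \frac{1+\eps}{1-\eps}\cdot \frac{e^{-\pi u R^2/s^2}}{(1-u)^{n/2}}.
$$
Taking $R = s\sqrt{n}$ and $u$ a small absolute constant (for instance $u=1/2$), the right-hand side becomes $(1+o(1))\cdot(\sqrt{2}\,e^{-\pi/2})^n = 2^{-\Omega(n)}$, which is the claim.

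The only nontrivial ingredient is the Poisson-summation estimate for the shifted theta series $\rho_{s',c}(L)$; this is standard in Micciancio--Regev and can be invoked as a black box. The main subtlety to watch is the balance between the Gaussian tail factor $e^{-\pi u R^2/s^2}$ and the $n$-dimensional volume blow-up $(1-u)^{-n/2}$ coming from widening the Gaussian: this is precisely the reason the radius must be of order $s\sqrt{n}$ rather than $s$ in order for the exponent in $n$ to stay strictly negative.
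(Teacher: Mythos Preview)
Your argument is correct and is precisely the standard Banaszczyk/Micciancio--Regev proof of this tail bound. Note, however, that the paper does not supply its own proof of this statement at all: it is listed in the preliminaries as a known fact drawn from \cite{MR07} (ultimately \cite{Ban93}), alongside the other smoothing-parameter facts, and is simply invoked as a black box later in the reduction. So there is no ``paper's proof'' to compare against; your write-up is essentially the proof one would find in those references. Your reading of the radius as $s\sqrt{n}$ rather than $s$ is also right, matching both the sentence that introduces the fact and how it is used downstream (the ${\cal E}_i$ bound).
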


\subsection{Lattice problems}

\begin{definition}

\textbf{Closest-vector problem / Shortest-vector problem}

\noindent
The closest-vector problem is defined as follows:
Given is a lattice $L = L(B)$, and a vector $v\in \mathbf{R}^n$.
Find a lattice vector $w$ for which $\|v - w\| = \dist(v,L)$.
The shortest-vector problem is defined as follows:
Given $L = L(B)$ find a non-zero lattice vector of minimal length.
\end{definition}

\begin{definition}\label{def:cvp}

\textbf{Approximate closest-vector (search problem)}

\noindent
The approximate shortest-vector problem $\cvp_{\beta}$  is the following problem:
given a lattice $L$, and a vector $v\in \mathbf{R}^n$
return $w\in L$ such that $\dist(v,w) \leq \beta \cdot \dist(v,L)$.
\end{definition}

%\begin{definition}
%
%\textbf{Approximate closest-vector (decision problem)}
%
%\noindent
%The approximate closest-vector problem $\gapcvp_{\beta}$  is the following problem:
%given a lattice $L$, a vector $v\in \mathbf{R}^n$, and a parameter $\gamma$
%decide among two cases, with the promise that the input belongs to one of them:
%\item
%YES: $\dist(v,L) \leq \beta$
%\item
%NO: $\dist(v,L) \geq \beta \gamma$
%\end{definition}

\begin{definition}\label{def:gdd}

\textbf{Guaranteed-distance decoding - $\gdd$}

\noindent
We are given an $n$-dimensional integer lattice $L = L(B)$, and $v\in \Z^n$ be some vector.
Fix $\eps = \poly(n)$.
The problem $\gdd$ is to find a vector $w\in L$, such that $\dist(w,v) \leq \eta_{\eps}(L)$.
\end{definition}
\noindent
The approximate version $\gdd_{\gamma}$ is to find such a vector $w\in L$ with  $\dist(w,v) \leq \eta_{\eps}(L) \cdot \gamma$.

\begin{definition}\label{def:sivp}

\textbf{Shortest Independent Vectors Problem}

\noindent
Let $L = L(B)$ be some $n$-dimensional lattice.
The problem $\sivp$ is to find a set of linearly independent vectors in $L$, of maximal length at most $\lambda_n(L)$.
\end{definition}
\noindent
The approximate version $\sivp_{\gamma}$ is to find such vectors whose length is at most $\lambda_n(L) \cdot \gamma$.

\section{The Systematic Normal Form (SNF)}

\begin{definition}\label{def:snf}

\textbf{Systematic Normal Form (SNF)}

\noindent
A matrix $B$ is said to be SNF if $B_{i,i}=1$ for all $i>1$,
and $B_{1,1}=N$ where $N$ is a prime number, and in addition, for all $i>1$ $B_{i,j}=0$ for all $i\neq j$.
\end{definition} 
This form is called suggestively "systematic" because for every $v\in L(B)$, the last $n-1$ coordinates,
are in fact the last $n-1$ coefficients of the vector under the basis $B$, which in error-correcting terminology can be considered as the "message" to be encoded by the matrix $B$.

\noindent
The following facts will be useful later on:
\begin{proposition}
If $B$ is in SNF form, then $NB^{-T}$, i.e. the matrix spanning the scaled dual of $L(B)$ assumes the following form:
\be\label{eq:ndual}
B = \left[\begin{array}{ccccc} 
1 & &&& \\
-b_{2} & 1& & &\\
-b_{3} & & 1 & &  \\
\vdots & & & \ddots & \\
-b_{n} & & & & 1 
\end{array}
\right]
\ee
\end{proposition}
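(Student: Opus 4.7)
The plan is a direct block-matrix inversion. Writing $B$ in $1{+}(n{-}1)$ block form as
$$
B \;=\; \begin{pmatrix} N & \mathbf{b}^{T} \\ \mathbf{0} & I_{n-1} \end{pmatrix},
\qquad \mathbf{b} = (b_{2},\ldots,b_{n})^{T},
$$
the inverse is easy to guess: since the bottom $(n{-}1)$ rows of $B$ are already the trailing rows of $I_{n}$, the only work is to clear the entries $b_{j}$ from the first row, which scaling the first column by $1/N$ accomplishes. Explicitly,
$$
B^{-1} \;=\; \begin{pmatrix} 1/N & -\mathbf{b}^{T}/N \\ \mathbf{0} & I_{n-1} \end{pmatrix},
$$
and one verifies $BB^{-1}=I$ by a one-line block multiplication. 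The fact that this is the unique inverse follows from $\det(B)=N\neq 0$.

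Transposing and then scaling by $N$ gives
$$
N B^{-T} \;=\; N\!\begin{pmatrix} 1/N & \mathbf{0}^{T} \\ -\mathbf{b}/N & I_{n-1} \end{pmatrix}
\;=\; \begin{pmatrix} 1 & \mathbf{0}^{T} \\ -\mathbf{b} & N\,I_{n-1} \end{pmatrix},
$$
which has $1$ in the $(1,1)$ entry, $-b_{j}$ in entry $(j,1)$ for $j\geq 2$, zeros in the remaining off-diagonal positions, and the scaling factor on the lower-right diagonal block. This is exactly the shape displayed in (\ref{eq:ndual}), so the proposition follows by reading off entries.

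There is essentially no obstacle here; the only point worth flagging is the interpretation of the diagonal of the displayed matrix in (\ref{eq:ndual}). In $\mathbb{Z}^{n}$ the literal matrix $NB^{-T}$ carries $N$ on the trailing diagonal, but in the natural setting for the proposition — namely the image of $NL^{*}$ inside $\F_{N}^{n}$ — those entries $N e_{i}$ vanish, so the reduced structure is completely captured by the first column $(1,-b_{2},\ldots,-b_{n})^{T}$ together with the unit vectors $e_{2},\ldots,e_{n}$. Both readings are immediate consequences of the block computation above, so no further work is required beyond this explicit inversion.
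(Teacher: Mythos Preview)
The paper states this proposition without proof, so there is no argument to compare against; your direct block-matrix inversion is correct and is the natural way to verify the claim.

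You are also right to flag the diagonal: the literal matrix $NB^{-T}$ carries $N$, not $1$, in positions $(i,i)$ for $i\geq 2$, so the display in the proposition is only accurate after reduction modulo $N$, i.e.\ as a description of $(NL^*)_N$ inside $\F_N^n$. This reading is consistent with how the paper actually uses the proposition later --- every $y\in (NL^*)_N$ is written as $(a,\,-b_2 a\modn,\ldots,\,-b_n a\modn)$ for some $a\in\F_N$, which is exactly the first column of your $NB^{-T}$ acting on $\F_N$ --- so your interpretation matches the intended use.
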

\begin{proposition}
There are $N^{n-1}$ points of $L = L(B_{SNF})$ in $\F_N^n$, and there are $N$ lattice points of $NL^*$ in that cube.
Hence there are $N$ points of $\F_N^n$ inside ${\cal P}(L)$, and $N^{n-1}$ points in ${\cal P}(NL^*)$. 
\end{proposition}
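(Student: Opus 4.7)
The plan is to establish each count by direct parametrization for the lattice points in the cube $\F_N^n$, and then obtain the integer-point counts inside ${\cal P}(\cdot)$ by a standard index / fundamental-domain argument.

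For $|L \cap \F_N^n|$: every $v \in L(B_{SNF})$ has the form $B z$ for some $z \in \Z^n$, namely
\[
v = \bigl(N z_1 + b_2 z_2 + \cdots + b_n z_n,\; z_2,\; \ldots,\; z_n\bigr).
\]
Requiring $v \in \{0,\ldots,N-1\}^n$ pins each $z_i$ ($i \geq 2$) to one of $N$ residues, and for any such choice there is exactly one $z_1 \in \Z$ bringing the first coordinate into $[0,N)$, giving $N^{n-1}$ points. For $|NL^* \cap \F_N^n|$: a direct inversion of $B_{SNF}$ shows that $NL^*$ contains $N\Z^n$ and is generated by $N\Z^n$ together with the vector $(1, -b_2, \ldots, -b_n)^T$. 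Hence $NL^* / N\Z^n$ is the cyclic subgroup of $\F_N^n$ generated by $(1, -b_2, \ldots, -b_n)$; since the first coordinate $1$ has additive order $N$ in $\F_N$, this subgroup has size exactly $N$, so $|NL^* \cap \F_N^n| = N$.

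For the ``hence,'' I would invoke the standard fact that for any full-rank integer sublattice $M \subseteq \Z^n$, the half-open fundamental parallelotope ${\cal P}(M)$ is a complete set of coset representatives for $\Z^n / M$, hence contains exactly $[\Z^n : M] = \det(M)$ integer points. Applied to $L$ (with $\det(L) = N$) this yields $|\Z^n \cap {\cal P}(L)| = N$; applied to $NL^*$, using $\det(NL^*) = N^n \det(L^*) = N^n / \det(L) = N^{n-1}$, it yields $|\Z^n \cap {\cal P}(NL^*)| = N^{n-1}$. The implication is also consistent with a direct volume/tiling check: the cube of volume $N^n$ decomposes into $N^{n-1}$ translates of ${\cal P}(L)$ (each of volume $N$) indexed by $L \cap \F_N^n$, and symmetrically into $N$ translates of ${\cal P}(NL^*)$; since each translate is a $\Z^n$-shift of ${\cal P}(L)$ (resp.\ ${\cal P}(NL^*)$) it carries the same number of integer points, so the $N^n$ integer points in the cube split evenly as $N^n / N^{n-1} = N$ per copy of ${\cal P}(L)$ and $N^n / N = N^{n-1}$ per copy of ${\cal P}(NL^*)$.

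No real obstacle arises; the argument is elementary throughout. The only bookkeeping is to adopt the half-open convention for ${\cal P}(\cdot)$ so that each coset of $\Z^n / M$ has a unique representative inside the parallelotope, which makes both the index count and the tiling argument go through cleanly.
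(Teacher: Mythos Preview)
Your proof is correct and complete. The paper does not actually supply a proof for this proposition; it is stated as a basic fact about SNF lattices and left to the reader, so there is no ``paper's own proof'' to compare against. Your direct parametrization of $L \cap \F_N^n$ via the coordinates $z_2,\ldots,z_n$, your identification of $(NL^*)_N$ with the cyclic group generated by $(1,-b_2,\ldots,-b_n)$ in $\F_N^n$, and your index/determinant argument for the parallelotope counts are exactly the kind of elementary verification the authors are implicitly relying on.
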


In fact, a somewhat stronger statement is true:
\begin{proposition}\label{prop:bij}
There exist bijections $\Phi_1: \F_N^n / L_N \mapsto {\cal P}(L)$,
$\Phi_2: {\cal P}(L) \mapsto NL^*$ as follows:
For every coset of $L_N$ in $\F_N^n$ $\Phi_1$ returns a unique element of ${\cal P}(L)$, 
and $\Phi_2$ maps each element of ${\cal P}(L)$ uniquely to an element of $(NL^*)_N = NL^* \cap \F_N^n$ in that coset.
Thus, in particular for every $x\in {\cal P}(L)$, there exists a unique $z\in (NL^*)_N$, such that $x + z\in L_N$.
\end{proposition}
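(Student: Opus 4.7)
The plan is to use the SNF shape to give closed-form descriptions of $L_N$, $(NL^*)_N$, and the integer points of $\mathcal{P}(L)$, and then establish both bijections by a counting/non-degeneracy argument.

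First, reading off from the SNF basis $B$: any point of $\mathcal{P}(L)$ has the form $(x_1 N + \sum_{i \geq 2} x_i b_i, x_2, \ldots, x_n)$ with $x_i \in [0,1)$, so integrality of the last $n-1$ coordinates forces $x_2 = \cdots = x_n = 0$ and the integer points are exactly $\{(k, 0, \ldots, 0) : k \in \{0, \ldots, N-1\}\}$. The defining relations of the primal and scaled-dual bases then give $L_N = \{h \in \F_N^n : h_1 \equiv \sum_{i \geq 2} b_i h_i \pmod N\}$ (of size $N^{n-1}$) and $(NL^*)_N = \{c \cdot (1, -b_2, \ldots, -b_n) \pmod N : c \in \F_N\}$ (of size $N$), matching the sampling formulas given in the introduction. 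Hence $|\F_N^n / L_N| = N$ coincides with both $|\mathcal{P}(L) \cap \Z^n|$ and $|(NL^*)_N|$, and the rest is a matter of checking that the candidate maps pick out a unique representative from each coset.

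For $\Phi_1$, I define $\Phi_1(h + L_N) = (k, 0, \ldots, 0)$ with $k \equiv h_1 - \sum_{i \geq 2} b_i h_i \pmod N$. This is well-defined (invariant under adding elements of $L_N$), and the $N$ integer points $(k,0,\ldots,0)$ lie in distinct cosets because $(k, 0, \ldots, 0) - (k', 0, \ldots, 0) \in L_N$ forces $k \equiv k' \pmod N$. By cardinality $\Phi_1$ is a bijection. For $\Phi_2$, I set $\Phi_2(x) =$ the unique $z \in (NL^*)_N$ with $x + z \in L_N$ (equivalently, $z \equiv -x$ in $\F_N^n / L_N$); well-definedness then reduces by cardinality to showing $L_N \cap (NL^*)_N = \{0\}$, i.e.\ that $(NL^*)_N$ is a set of coset representatives.

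This last reduction is the hard part. Substituting $z = c(1, -b_2, \ldots, -b_n)$ into the defining equation of $L_N$ yields the scalar condition $c(1 + \sum_{i \geq 2} b_i^2) \equiv 0 \pmod N$, which forces $c = 0$ precisely when $N \nmid (1 + \sum b_i^2)$. I expect to discharge this obligation by appealing to the freedom in the SNF construction of Lemma~\ref{lem:snf}: the prime $N$ is drawn from a large pool and can be chosen to avoid the finitely many primes dividing $1 + \sum b_i^2$. Once this non-degeneracy is in hand, $\Phi_2$ is a well-defined bijection, and the concluding ``for every $x \in \mathcal{P}(L)$ there is a unique $z \in (NL^*)_N$ with $x + z \in L_N$'' is just the defining property of $\Phi_2$.
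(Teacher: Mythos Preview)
The paper states this proposition without proof; the closest thing to an argument appears in the proof of the subsequent Claim~\ref{cl:dual}, where the explicit computation of $y = \Phi_3(x)$ is carried out and one solves for the scalar $a$ via the relation
\[
x_1 - \sum_{i=2}^n x_i b_i \equiv -a\Bigl(1 + \sum_{i=2}^n b_i^2\Bigr) \pmod N,
\]
followed by the assertion ``Since $N$ is prime, then the number $\sum_{i=2}^n b_i^2 + 1$ has an inverse.'' Your argument is exactly this computation unpacked into the two bijections, and the non-degeneracy condition you isolate, $N \nmid (1 + \sum_{i\geq 2} b_i^2)$, is precisely the content of that invertibility assertion. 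So your route and the paper's (implicit) route coincide.

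Where you are more careful than the paper: primality of $N$ alone does \emph{not} force $1 + \sum b_i^2 \not\equiv 0 \pmod N$, and you correctly flag this. Your proposed fix---exploiting the slack in the SNF construction of Lemma~\ref{lem:snf}, where $N = T b_{1,n}'' + \delta$ is chosen as a nearby prime while the remaining top-row entries $T b_{1,1}', T b_{1,2}'', \ldots, T b_{1,n-1}''$ are already fixed---is sound: one simply discards the finitely many primes dividing $1 + \sum_{i\geq 2} b_i^2$ when selecting $\delta$. The paper silently assumes this; your proposal makes it explicit.
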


Perhaps more interestingly, though, for SNF one can compute efficiently, for each integer vector $x$, a corresponding
dual-lattice vector $y$, such that their sum is in $L_N$:
\begin{claim}\label{cl:dual}

\textbf{Compute dual vector for any vector}

\noindent
For $x\in \F_N^n$, the map $\Phi_3:= \Phi_2 \circ \Phi_1(x)$ 
can be computed efficiently.
\end{claim}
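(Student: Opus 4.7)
The strategy is to exploit the explicit coordinate descriptions of $L_N$ and $(NL^*)_N$ given by the SNF shape, so that the full map $\Phi_3$ collapses to a single linear congruence modulo $N$. Reading off from the SNF basis $B$, a vector lies in $L_N$ precisely when it has the form
\[
L_N = \left\{\left(\sum_{i \geq 2} b_i y_i \bmod N,\; y_2, \ldots, y_n\right) : y_i \in \F_N \right\},
\]
while from equation (\ref{eq:ndual}) for the matrix spanning $NL^*$ one reads off
\[
(NL^*)_N = \left\{(a, -b_2 a, -b_3 a, \ldots, -b_n a) \bmod N : a \in \F_N\right\},
\]
so $(NL^*)_N$ is parametrized by its single first coordinate $a \in \F_N$.

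First I would compute $\Phi_1(x)$ for an arbitrary $x \in \F_N^n$. The $N$ elements of $\F_N^n \cap {\cal P}(L)$ are exactly the vectors $(y_1, 0, \ldots, 0)$ with $y_1 \in \{0, \ldots, N-1\}$, and they form a complete set of coset representatives for $\F_N^n/L_N$. Concretely, subtracting the lattice vector $\sum_{i \geq 2} x_i b_i \in L$ from $x$ zeros out coordinates $2, \ldots, n$, and reducing the first coordinate modulo $N$ yields $y_1 = (x_1 - \sum_{i \geq 2} b_i x_i) \bmod N$ in $O(n)$ modular operations.

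Next, to compute $\Phi_2$ of $(y_1, 0, \ldots, 0)$, I parametrize $z \in (NL^*)_N$ by its first coordinate $a$ and impose that $(y_1, 0, \ldots, 0) + z$ lies in $L_N$ via the defining relation of $L_N$ above. This collapses the condition to the single scalar congruence
\[
\left(1 + \sum_{i \geq 2} b_i^2\right) a \equiv -y_1 \pmod N,
\]
from which
\[
a \equiv \left(\sum_{i \geq 2} b_i x_i - x_1\right)\left(1 + \sum_{i \geq 2} b_i^2\right)^{-1} \pmod N,
\]
and $\Phi_3(x) = (a, -b_2 a, \ldots, -b_n a) \bmod N$. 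This is evaluated in $\poly(n, \log N)$ time using $n$ modular multiplications and one modular inversion via the extended Euclidean algorithm.

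The only subtlety is the invertibility of $1 + \sum_{i \geq 2} b_i^2$ modulo $N$, and this is forced by Proposition \ref{prop:bij}: if this quantity vanished modulo the prime $N$, then distinct $a \in \F_N$ would produce vectors $z$ in the same coset of $L_N$, contradicting the asserted bijectivity of $\Phi_2$ onto a set of size $N$. Since $N$ is prime, nonvanishing implies invertibility in $\F_N$, so the modular inverse exists and the displayed formula computes $\Phi_3(x)$ in polynomial time. There is no substantive obstacle beyond this invertibility check; the argument is essentially a direct linear-algebra computation over $\F_N$ that relies entirely on the sparse structure of the SNF and its dual.
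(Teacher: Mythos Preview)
Your proof is correct and follows essentially the same route as the paper: both parametrize $(NL^*)_N$ by its first coordinate $a$, impose the membership condition $x+y\in L_N$, and arrive at the identical linear congruence $(1+\sum_{i\geq 2} b_i^2)\,a \equiv \sum_{i\geq 2} b_i x_i - x_1 \pmod N$, solved by a single modular inversion. The only difference is organizational---you explicitly factor the computation through the coset representative $(y_1,0,\ldots,0)$ before applying $\Phi_2$, whereas the paper writes the condition $x+y\in L_N$ directly---and your justification of the invertibility of $1+\sum b_i^2$ via Proposition~\ref{prop:bij} is in fact more careful than the paper's bare appeal to primality of $N$.
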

\begin{proof}
Let $x\in \F_N^n$.
We would like to find (the unique) $y = \Phi_3(x) =  \Phi_2 \circ \Phi_1(x) \in NL^*$ for which $x + y\in L_N$.
Each point in $y\in NL^*$ is characterized uniquely by an element $a\in \F_N$ as follows:
\be\label{eq:nl}
y = (a, -b_2 a  \modn,\hdots, -b_n a\modn).
\ee
Thus, to find $y$ we would like to solve the following vector equality over $a,z_2,\hdots, z_n\in \F_N$:
\be
(x_1,\hdots, x_n)^T + (a, -b_2 a \modn,\hdots, -b_n a \modn)
=
\left(\sum_{i=2}^n b_i z_i \modn, z_2,\hdots, z_n\right)^T
\ee
Consider the first coordinate. We have:
\be
x_1 + a = \sum_{i=2}^n b_i z_i \modn.
\ee
Substituting in the above $z_i = x_i - a b_i \modn$ for all $i\geq 2$ implies:
\be
x_1 - \sum_{i=2}^n x_i b_i =  - a \cdot \left( \sum_{i=2}^n b_i^2 + 1 \right) \modn.
\ee
Since $N$ is prime, then the number
$\sum_{i=2}^n b_i^2 + 1$ has an inverse.
Thus, the parameter $a$ can be
computed uniquely from the equation above, which implies that $y$ can be determined uniquely and efficiently.
\end{proof}

\subsection{Reduction to SNF form}

In this section we provide an efficient reduction from an arbitrary lattice to a lattice in SNF form,
that preserves all important properties of the lattice.
Specifically - 
it allows the reduction of any computational problem on an arbitrary lattice $L$ to another problem on an SNF lattice $L_{SNF}$
such that any solution to the reduced problems allows to find {\it efficiently} a solution to the problem on $L$.

\begin{lemma}\label{lem:snf}

\textbf{Efficient reduction to SNF}

\noindent
There exists an efficient algorithm that for any LLL-reduced upper-triangular matrix $B$, 
and numbers $a>0,b>0$, 
computes efficiently a tuple $\langle B_{SNF},M,T \rangle$, where
$B_{SNF}$ is an SNF matrix, $T$ is a positive integer
$T = \max\{ 2^{\OO(n)}/ \lambda_1(L(B)),\det(B) \}$, and $M\in GL_n(\Z)$, such that the following holds:
For any $v\in L(B_{snf}/T)$ put $\hat{v} = B \cdot (T B_{snf}^{-1} M v)$.
If $\|v\| \leq \det(B) \cdot n^a$
then $\hat{v}\in L(B)$ and  $\left\|\hat{v} - v\right\|  = \OO(n^{-b})$.
Also $\det(B_{SNF}) = \OO(\det(B) \cdot T^n)$.
\end{lemma}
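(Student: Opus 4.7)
The plan is to construct $\langle B_{SNF}, M, T\rangle$ so that $L(B_{SNF}/T)$ is, after a unimodular change of basis $M$, essentially the scaled lattice $TL(B)$ up to a controllable perturbation. Set $D = \det(B)$ and $\lambda = \lambda_1(L(B))$, put $T = \max\{2^{cn}/\lambda, D\}$ for a large constant $c$ depending on $a$ and $b$, and pick by Bertrand's postulate a prime $N \in [T^n D, 2T^n D]$. This $N$ becomes the $(1,1)$-entry of $B_{SNF}$ and immediately yields $\det(B_{SNF}) = N = \OO(T^n \det(B))$.

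Given $N$, I would construct the first-row entries $b_2,\ldots,b_n$ of $B_{SNF}$ and the unimodular $M$ as follows. Since $B$ is upper-triangular HNF with diagonal $(d_1,\ldots,d_n)$ and $T\geq D$, the scaled matrix $TB$ admits a sequence of Bezout column operations reducing the lower diagonals $Td_2,\ldots,Td_n$ to $1$ and collecting the full determinant $T^n D$ into the $(1,1)$-entry. The resulting first-row entries, reduced modulo $N$, are declared $b_2,\ldots,b_n$, and the $(1,1)$-entry $T^n D$ is rounded to the nearby prime $N$. The matrix $M$ is then selected so that $U := B_{SNF}^{-1}MB_{SNF}$ is the integer unimodular matrix recording these Bezout operations; this forces $M$ to act as an automorphism of $L(B_{SNF})$, which is possible because unimodular column operations over $\Z/N\Z$ lift, when $N$ is prime, to genuine automorphisms of the SNF sublattice.

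For the verification, writing $v = (B_{SNF}/T)z$ with $z\in \Z^n$ and using the explicit form of $B_{SNF}^{-1}$ (top row $(1/N, -b_2/N,\ldots,-b_n/N)$, identity below) gives $TB_{SNF}^{-1}Mv = Uz \in \Z^n$, so $\hat v = B(Uz) \in L(B)$ follows by linearity. The error is $\hat v - v = (TBU - B_{SNF})z/T$, where $TBU - B_{SNF}$ is an integer matrix whose only significant contribution comes from the single $(1,1)$-rounding of $T^n D$ to the prime $N$ (the other entries vanish by the choice of the $b_j$'s). Using that this residual has magnitude at most the prime gap --- bounded by $(T^n D)^{0.525}$ via Baker--Harman--Pintz --- and that the coefficient $z_1$ is forced by $v\in L(B_{SNF}/T)$ and $\|v\|\leq D n^a$ to be small, the error collapses to $\OO(n^{-b})$ once $T$ is chosen exponentially large; here one invokes $T \geq 2^{cn}/\lambda$ together with Minkowski's $\lambda\leq \sqrt{n}\, D^{1/n}$ to absorb the remaining polynomial factors.

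The main obstacle is the tight estimate for the error term $(T^n D - N)z_1/T$: the naive spectral bound $|z_1|\leq \|TB_{SNF}^{-1}\|_{op}\|v\|$ is far too weak and in fact grows with $T$ rather than shrinks, so one must exploit the defining SNF congruence $Nz_1 + \sum_{k\geq 2} b_k z_k = Tv_1$ to pin $z_1$ down to the much smaller range forced on lattice points of bounded norm. A secondary subtlety is ensuring the Bezout column operations on $\Z$ lift coherently to automorphisms of $L(B_{SNF})$ modulo $N$; this requires careful bookkeeping of first-row residues throughout the reduction, and is the second place where the primality of $N$ is essential.
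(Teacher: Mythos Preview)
Your proposal has a fundamental gap in the construction step. You claim that ``the scaled matrix $TB$ admits a sequence of Bezout column operations reducing the lower diagonals $Td_2,\ldots,Td_n$ to $1$.'' This is impossible: since $B$ is an integer matrix, every entry of $TB$ is divisible by $T$, so $L(TB)=T\cdot L(B)\subseteq T\Z^n$. But any SNF matrix $B_{SNF}$ has, for each $j\ge 2$, a column $(b_j,0,\ldots,1,\ldots,0)^T$ with a $1$ in position $j$, so $L(B_{SNF})\not\subseteq T\Z^n$ once $T>1$. Equivalently, every invariant factor of $TB$ is a multiple of $T$, so $\Z^n/L(TB)$ is never cyclic of prime order. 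No unimodular column transformation can take $TB$ to an SNF matrix, and there is nothing coprime for a Bezout step to exploit. Your later attempt to rescue this by choosing $M$ as an automorphism of $L(B_{SNF})$ does not help: such an $M$ keeps you inside $L(B_{SNF})$ and says nothing about its relation to $L(TB)$.

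The paper's construction sidesteps this obstruction by first \emph{perturbing} $B$: it adds $1/T$ along the subdiagonal and rounds every entry to the nearest multiple of $1/T$, producing a matrix $B_2$ with $|B_2(i,j)-B(i,j)|\le 1/T$. Now $TB_2$ is an integer matrix with $1$'s on the subdiagonal, and those $1$'s are exactly what make the column reduction work: subtracting integer multiples of column $i$ from later columns zeroes out rows $2,\ldots,n$ except on the subdiagonal, leaving a matrix that (after a cyclic column shift and rounding the single surviving large entry to a nearby prime) is $B_{SNF}/T$. The unimodular $M$ simply records these column operations, so $L(B_2)=L(B_{SNF}/T)$ exactly.

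Consequently the error $\|\hat v - v\|$ arises entirely from the entrywise gap between the two bases $B$ and $M^{-1}B_{SNF}/T$ of the two nearby lattices, not from a single rounding residue $(T^nD-N)z_1/T$ as in your outline. The paper controls it by a ``close-lattices'' estimate: if two bases differ entrywise by $\alpha$ and one is LLL-reduced, then points with the same integer coefficient vector differ by at most $\alpha\,\|v\|\,2^{\OO(n)}/\lambda_1$. With $\alpha=\OO(\log(T\cdot b_{1,n}'')/T)$ and $T$ as stated, this is $\OO(n^{-b})$. Your proposed error analysis, even setting aside the construction issue, is tracking the wrong quantity.
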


\noindent
Before presenting the proof,
let us bound the coefficients of any short vector in a lattice. 
%\begin{proposition}
%Suppose that we have a $n$-dimensional lattice $L$ with basis vectors $v_1$, $v_2$,
%$\ldots$, $v_n$. Let $E_{max}$ be the largest absolute value of any
%entry in any of the basis vectors. 
%Then, any lattice vector $v$ can be represented in 
%this basis with the coefficient on $v_i$ at most $E_{max}^{n-1} n^n |v|/ \det(L)$.
%\end{proposition}
%
%\begin{proof} We use Cramer's rule. Cramer's rule says that if $A$ is an $n \times n$
%matrix, and $A \ell = v$, then the
%$j$th coordinate of $\ell$ is 
%\be
%\frac{\det \left( c_1 c_2 \ldots c_{j-1} v c_{j+1} \ldots c_n \right) }{\det A}
%\ee
%where $c_i$ is the $i$th column of $A$. 
%
%We can apply Cramer's formula to a vector in our
%lattice. The numerator of Cramer's formula for the coefficient
%on $v_j$ is a matrix where all entries not in the $j$th column are 
%at most $E_{max}$, and where the $j$th column 
%has all entries at most $|v|$. We
%bound the determinant using the Leibnitz formula for the determinant:
%\be
%\det A = \sum_{\sigma \in S_n} \mbox{sign}(\sigma) \prod_{i=1}^n A(i,\sigma_i).
%\ee
%There are $n!$ terms in this formula, and each term is at most $E_{max}^{n-1} |v|$. 
%Thus, the $j$'th coordinate is at most $E_{max}^{n-1} n^n |v|/\det(L)$. 
%\end{proof}

\begin{proposition}\label{prop:lll1}
Let $B$ be an $LLL$-reduced matrix, and $v\in L$ be some lattice vector.
Then $v$ can be represented in the basis $B$ using a vector coefficients of length at most $ \|v\|\cdot 2^{\OO(n)}/ \lambda_1(L(B))$.
\end{proposition}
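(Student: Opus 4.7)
The plan is to expand $v$ in the Gram--Schmidt basis of $B$, use the LLL size-reduction property to back-substitute for the integer coefficients $z_i$ (with $v=\sum_i z_i b_i$), and then translate the resulting bound into one involving $\lambda_1(L(B))$ by invoking LLL's geometric decay of $\|b_i^*\|$.

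First I would let $b_1^*, \ldots, b_n^*$ denote the Gram--Schmidt vectors associated with $B$ and write $v = \sum_i v_i^* b_i^*$. Matching this against $v = \sum_i z_i b_i$ via $b_i = b_i^* + \sum_{j<i}\mu_{i,j} b_j^*$ gives the triangular relation $v_i^* = z_i + \sum_{j>i}\mu_{j,i} z_j$. Orthogonality of the $b_i^*$'s yields $\|v\|^2 = \sum_i (v_i^*)^2 \|b_i^*\|^2$, so $|v_i^*| \leq \|v\|/\|b_i^*\|$ for each $i$. Writing $M := \|v\|/\min_i \|b_i^*\|$, we have $|v_i^*|\leq M$ uniformly.

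Next I would back-substitute for the $z_i$'s from $i=n$ down to $i=1$, using the size-reduction bound $|\mu_{j,i}| \leq 1/2$ of LLL. Setting $S_i := \sum_{j \geq i} |z_j|$, the identity $z_i = v_i^* - \sum_{j>i}\mu_{j,i} z_j$ yields $|z_i| \leq M + \tfrac{1}{2} S_{i+1}$, hence the recursion $S_i \leq M + \tfrac{3}{2} S_{i+1}$ with $S_{n+1}=0$. Iterating gives $\|z\|_1 \leq S_1 \leq M \sum_{k=0}^{n-1}(3/2)^k = 2^{O(n)}\,M$, and therefore $\|z\|_2 \leq \|z\|_1 \leq 2^{O(n)}\,\|v\|/\min_i \|b_i^*\|$.

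Finally I would convert $\min_i \|b_i^*\|$ into $\lambda_1(L(B))$ using the LLL geometric decay $\|b_i^*\|^2 \geq \tfrac{1}{2}\|b_{i-1}^*\|^2$, which implies $\|b_i^*\| \geq 2^{-(n-1)/2}\|b_1^*\| = 2^{-(n-1)/2}\|b_1\|$. Since $b_1$ is itself a nonzero lattice vector, $\|b_1\| \geq \lambda_1(L(B))$, so $\min_i \|b_i^*\| \geq 2^{-(n-1)/2}\lambda_1(L(B))$. Substituting gives $\|z\| \leq 2^{O(n)}\|v\|/\lambda_1(L(B))$, as required. The only point requiring care is controlling the blow-up during back-substitution, but the geometric series argument above absorbs everything cleanly into a single $2^{O(n)}$ factor, so no essential obstacle arises.
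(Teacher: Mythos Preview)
Your proof is correct and follows essentially the same approach as the paper: both use the Gram--Schmidt/QR structure of $B$ together with the LLL geometric decay $\|b_i^*\| \geq 2^{-(i-1)/2}\|b_1\| \geq 2^{-(n-1)/2}\lambda_1$ to bound the coefficient vector. If anything, your argument is more careful: the paper simply asserts $\|x\| = \|B^{-1}v\| \leq \sqrt{n}\,R_{n,n}^{-1}\|v\|$ ``by unitarity of $Q$'', whereas your explicit back-substitution using $|\mu_{j,i}|\le 1/2$ and the $(3/2)^n$ geometric series is precisely what is needed to justify a bound of this shape for the upper-triangular factor.
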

\begin{proof}
Write $v = Bx$, where $x\in \Z^n$.
Consider the QR decomposition of $B$ as $B = Q \cdot R$, where $Q$ is a unitary matrix,
and $R$ is an upper-triangular matrix.
Thus $Q\cdot R$ correspond to the Gram-Schmidt decomposition of $B$.
We express $x$ as:
\be
\|x\| = \|B^{-1} v\| \leq \sqrt{n} R_{n,n}^{-1} \|v\|,
\ee
where the second inequality follows from unitarity of $Q$.
By LLL we know that for some constant $\beta>0$ we have:
\be
\lambda_1:= \lambda_1(L) \leq R_{1,1} \leq 2^{\beta n} \cdot R_{n,n}.
\ee
Therefore
\be
R_{n,n} \geq 2^{- \beta n} \lambda_1.
\ee
This implies that
\be
\|x\| \leq \sqrt{n} R_{n,n}^{-1} \|v\| \leq \|v\| \sqrt{n} \cdot 2^n/\lambda_1  = \|v\| \cdot 2^{\OO(n)} / \lambda_1.
\ee
\end{proof}

\noindent
The following are easy corollaries of the above:

%\begin{proposition}
%Let $B$ be an $LLL$-reduced integer matrix, 
%and $B'$ be some invertible matrix for which $\|B - B'\|  = 2^{-\OO(n)}$,
%and $v\in L(B')$ be some lattice vector.
%Then $v$ can be represented in the basis $B$ using a vector coefficients of length at most $2^{\OO(n)} \|v\|$.
%\end{proposition}

\begin{proposition}\label{prop:close}
\label{two-close-lattices}
Let $B_1 = \{v_i\}_{i=1}^n$ be some LLL-reduced basis
and another basis $B_2 = \{w_i\}_{i=1}^n$ for lattice $L_2 = L(B_2)$.
Suppose that $\|v_i - w_i\| \leq \alpha$. 
Let $v = \sum_{i=1}^n c_i v_i$ be a point in $L_1$ and $w = \sum_{i=1}^n
c_i w_i$ be the corresponding point in $L_2$. Then $\left\|v - w \right\| \leq \|v\| \alpha 2^{\OO(n)}/\lambda_1(L_1) $. 
\end{proposition}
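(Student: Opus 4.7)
The plan is to directly expand the difference $v-w$ in terms of the shared coefficient vector and then invoke Proposition \ref{prop:lll1} to control the size of that vector. Writing
\[
v - w = \sum_{i=1}^n c_i (v_i - w_i),
\]
the triangle inequality together with the per-coordinate bound $\|v_i - w_i\| \leq \alpha$ gives
\[
\|v - w\| \leq \alpha \sum_{i=1}^n |c_i| \leq \alpha \sqrt{n}\, \|c\|,
\]
where $c = (c_1,\ldots,c_n)$ and the last step is Cauchy--Schwarz.

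The second step is to bound $\|c\|$. Since $v = B_1 c \in L_1$ and $B_1$ is LLL-reduced, Proposition \ref{prop:lll1} applies directly and yields
\[
\|c\| \leq \|v\| \cdot 2^{\OO(n)} / \lambda_1(L_1).
\]
Combining with the previous display, and absorbing the factor $\sqrt{n}$ into $2^{\OO(n)}$, gives
\[
\|v - w\| \leq \|v\|\, \alpha \, 2^{\OO(n)} / \lambda_1(L_1),
\]
which is the claim.

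There is essentially no obstacle here: the statement is a one-line consequence of Proposition \ref{prop:lll1} combined with the triangle inequality, and the only subtlety is that the basis $B_2$ need not be LLL-reduced, but this is irrelevant since the coefficient vector $c$ is determined by the representation of $v$ in $B_1$ and we only use the bound $\|v_i - w_i\| \leq \alpha$ for $B_2$. The factor $\sqrt{n}$ from Cauchy--Schwarz is harmless because the final bound already carries an exponential factor $2^{\OO(n)}$ coming from LLL's guarantee on the ratio $R_{1,1}/R_{n,n}$.
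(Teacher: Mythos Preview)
Your proof is correct and matches the paper's own argument essentially line for line: expand $v-w$ as $\sum_i c_i(v_i-w_i)$, apply the triangle inequality, and invoke Proposition~\ref{prop:lll1} to bound the coefficient vector, absorbing the polynomial factor into $2^{\OO(n)}$. The only cosmetic difference is that you pass from $\sum_i |c_i|$ to $\sqrt{n}\,\|c\|$ via Cauchy--Schwarz while the paper uses the cruder bound $n\,\|c\|$; both are swallowed by the exponential.
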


\begin{proof}
By the triangle inequality we have:
\begin{eqnarray}
\left\|v-w \right\| &= &  \left\|\sum_{i=1}^n c_i v_i  -  \sum_{i=1}^n c_i w_i  \right\| \\
&\leq & \sum_{i=1}^n \| v_i -w_i \| |c_i| \\
&\leq & n\alpha \|v\| \cdot 2^{\OO(n)} /\lambda_1(L_1)\\
&=     & \alpha \|v\| \cdot 2^{\OO(n)}/\lambda_1(L_1),
\end{eqnarray}
where the inequality before last follows from Proposition \ref{prop:lll1}. 
\end{proof}

\subsection{Proof of Lemma \ref{lem:snf}}

\begin{proof}
We first use $T$ as a parameter and determine it later on in the proof.
We start from an upper-triangular LLL-reduced matrix $B_1$:
\be
B_1 = \left[\begin{array}{ccccc} 
b_{1,1} & b_{1,2} & b_{1,3} & \ldots & b_{1,n} \\
& b_{2,2} & b_{2,3} & \ldots & b_{2,n} \\
& & b_{3,3} & \ldots & b_{3,n} \\
& & & \ddots & \vdots \\
& & & & b_{n,n} 
\end{array}
\right]
\ee
add $1/{T}$ along the subdiagonal,
and truncate each non-zero entry to its nearest integer multiple of $1/T$:
\be
\label{add-subdiagonal}
B_2 = \left[\begin{array}{ccccc} 
b_{1,1}' & b_{1,2}' & b_{1,3}' & \ldots & b_{1,n}' \\
\frac{1}{T}& b_{2,2}' & b_{2,3}' & \ldots & b_{2,n}' \\
& \frac{1}{T}& b_{3,3}' & \ldots & b_{3,n}' \\
& & \ddots & \ddots & \vdots \\
& & & \frac{1}{T}& b_{n,n}'
\end{array}
\right],
\ee
where $b_{i,j}' = \round{b_{i,j} T} /T$.
We note that
\be\label{eq:entry}
\forall i,j \ \ |B_2(i,j) - B_1(i,j)| \leq 1/T.
\ee

We now use column operations to make rows $2$, $3$, $\ldots$,
$n$ of the lattice zero except for the subdiagonal. This involves subtracting integer multiples
of the $i$th column from all later columns. We obtain a lattice of the form.
\be\label{eq:b3}
\label{after-Gaussian}
B_3 = \left[\begin{array}{ccccc} 
b_{1,1}' & b_{1,2}'' & b_{1,3}'' & \ldots & b_{1,n}'' \\
\frac{1}{T}& 0 & 0 & \ldots & 0 \\
& \frac{1}{T}& 0 & \ldots & 0 \\
& & \ddots & \ddots & \vdots \\
& & & \frac{1}{T}& 0 
\end{array}
\right]
\ee
Observe that if we move the $n$th column to the first column and multiply all entries by
$T$, we now have a lattice which is a $1/T$ multiple of an SNF lattice, 
except possibly from the entry $b_{1,n}''$ which may not be prime.

We now show how to make determinant of the new lattice prime,
by rounding
$b_{1,n}''$ to a prime.  
By standard number-theory conjecture
\footnote{Generalized Riemann Hypothesis (GRH)} we assume:
\be\label{eq:delta}
\exists \delta = \OO(\log(T b_{1,n}'')), \ \ T b_{1,n}'' + \delta \mbox{ is prime }.
\ee

We need to compute the matrix that transforms the basis given in equation
(\ref{add-subdiagonal}) to the basis given in equation (\ref{after-Gaussian}).
That is, we want the matrix $M$ such that $B_3 =  B_2 M$. The diagonal and
superdiagonal of the matrix can be easily calculated: 
\begin{eqnarray}
M &=& \left( 
\begin{array}{ccccccc}
1 & T b_{2,2}' & T b_{2,3}' &  \ldots & T b_{2,n-1}'&   T b_{2,n}' \\
  & 1 & T b_{3,3}' & \ldots & T b_{3,n-1}' & T b_{3,n}' \\
  &   & 1 & \ldots & T b_{4,n-1}' & T b_{4,n}' \\
  &      &   & \ddots &   \vdots & \vdots \\
  &   & & & 1 & T b_{n,n}'  \\ 
  &   &  & &   & 1 
\end{array}
\right)^{-1} \\
&=&
\left( 
\begin{array}{ccccccc}
1 & - T b_{2,2}' &   \ldots &   \\
  & 1 & - T b_{3,3}' & \ldots &  \\
  &   & 1 & - T b_{4,4}' & \ldots\\
  &      &   & \ddots &   \ddots  \\
  &   & & & 1 & - T b_{n,n}'  \\ 
  &   &  & &   & 1 
\end{array}
\right)
\end{eqnarray}
By the above, $M$ is a unimodular matrix, hence $\det(B_2) = \det(B_3)$.
Note that both $M$ and $M^{-1}$ are upper triangular matrices with $1$s along the diagonal.
The determinant of $TB_3$ is $T b_{1,n}''$. 
To obtain a lattice with a prime determinant, we need to round $\det(T B_3)$ to a 
nearby prime.  Let us assume that $\det{T B_3} + \delta$ is prime. This rounding 
corresponds to adding $\delta/T$ to the entry $B_3(1,n)$. 

What effect does this change have on the basis of the lattice in $B_2$?
Let $\Delta$ be the matrix with
$\Delta(1,n) = \delta/T$ and all other entries $0$. Then our matrix in the SNF basis
is $B_3 + \Delta$. To see what the effect on $B_2$ is, we merely need to multiply by
$M^{-1}$. That is,
\be
B_2 + \Delta M^{-1} = ( B_3 + \Delta )M^{-1} .
\ee
Using the form we derived above for $M^{-1}$, we see that because there are 1s along the
diagonal of $M$ then 
$\Delta M^{-1} = \Delta$. Thus, we can make $B_2$ obtain a prime determinant by simply
adding $\delta/T$ to $B_2(1,n) = b_{1,n}'$. This
changes the length of the $n$th basis vector by at most
$\delta/T$.

Let $B_4$ denote then the output SNF matrix.
\be
B_4 = \left[\begin{array}{ccccc} 
T b_{1,n}'' + \delta & T b_{1,1}' & T b_{1,2}'' & \ldots & T b_{1,n-1}'' \\
0 & 1 & 0 & \ldots & 0 \\
   & 0 & 1 & \ldots & 0 \\
& & \ddots & \ddots & \vdots \\
& & & 0& 1 
\end{array}
\right]
\ee
By Equations \ref{eq:entry} and equation \ref{eq:delta} :
\be
\forall i,j\in [n] |(M^{-1} B_4(i,j))/T - B_1| = \OO(\log(T b_{1,n}'')/T).
\ee
That is, the basis $M^{-1} B_4/T$ of $L(B_4)/T$ is entry-wise close to $B_1$.
By assumption $B_1$ is in particular LLL-reduced.
Hence we can invoke Proposition \ref{prop:close} w.r.t. these two bases:
Consider some $v\in L(B_4/T)$.
Let $\beta$ denote the implicit constant in the bound of Proposition \ref{prop:close}.
Applying Proposition \ref{prop:close} implies that 
the 
corresponding vector $\hat{v} = B_1 \cdot (TB_4^{-1}M) \cdot v\in L(B_1)$
has
\be\label{eq:bound2}
\left\|\hat{v}- v\right\| \leq \frac{2^{\beta n} \|v\| \log(T b_{1,n}'')}{T\lambda_1(L(B_1))}.
\ee
By assumption $\|v\| \leq n^a \det(B_1)$.
Then together with Equation \ref{eq:bound2} we conclude that there exists
\be
T = \poly(n) \cdot \max \{ \det(B_1)/\lambda_1(L(B_1)), 2^{\beta n} \}.
\ee
such that
\be
\left\|\hat{v}- v\right\| \leq  n^{-b}.
\ee
Finally, the entry $B_{SNF}(1,1) = \det(B_{SNF}) \leq 2 b_{1,n}'' \cdot T^n$, hence the determinant $\det(B_{SNF})$ is upper-bounded by 
\be
\det(B_{SNF})  = \OO(\det(B) T^n).
\ee

\end{proof}

The lemma above implies that one can reduce the standard lattice problems, given for an arbitrary lattice, to the same problem on a lattice in SNF,
and then translate the output solution efficiently to a solution for the original lattice:
\begin{corollary}\label{cor:cvp}

\textbf{SNF reduction preserves approximate $\cvp$}

\noindent
Let $(B,v)$ be an input to $\cvp_{\gamma}$ for some $\gamma$, 
where $B$ is an LLL-reduced upper-triangular matrix.
Let $\langle B_{SNF},T,M \rangle$ denote the tuple returned by the SNF reduction, for parameters $a>1/2,b$.
Suppose that for $x_0\in L(B_{SNF})$, 
$v\in [\det(B)]^n \cap \Z^n$ 
we have
$\|x_0 - Tv\| \leq \gamma \cdot \dist(Tv, L(B_{SNF}))$, for $\gamma = n^{a - 1/2}$.
Then the vector $x_{out} := B(B_{SNF}^{-1} M) x_0\in L(B)$ has
$$
\| x_{out} - v\| \leq \gamma \cdot \dist(v,L(B)).
$$
\end{corollary}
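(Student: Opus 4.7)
The plan is to transport the approximate closest vector relation on $L(B_{SNF})$ down to $L(B)$ through the transfer map supplied by Lemma~\ref{lem:snf}. The argument has three phases: rescale by $1/T$ and apply the lemma to get a nearby point in $L(B)$; identify the transfer map as a bijection $L(B_{SNF}/T)\to L(B)$ via a determinant comparison, so that distances transport in both directions; then clean up the small additive slack using integrality of $L(B)$.

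First, set $\tilde x_0 := x_0/T \in L(B_{SNF}/T)$, so that the hypothesis rescales to $\|\tilde x_0 - v\| \le \gamma \dist(v, L(B_{SNF}/T))$. Since $v \in [\det(B)]^n$ gives $\|v\| \le \sqrt{n}\det(B)$ and $\dist(v, L(B_{SNF}/T)) \le \|v\|$ (take the zero lattice point), the vector $\tilde x_0$ satisfies $\|\tilde x_0\| \le (1+\gamma)\sqrt{n}\det(B) \le n^a\det(B)$, using $a > 1/2$. Lemma~\ref{lem:snf} therefore produces $x_{out} = B(TB_{SNF}^{-1}M\tilde x_0) \in L(B)$ with $\|x_{out} - \tilde x_0\| = \OO(n^{-b})$.

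Next, the linear map $\pi(u) := B(TB_{SNF}^{-1}M u)$ is injective (its matrix is nonsingular), and Lemma~\ref{lem:snf} ensures $\pi(L(B_{SNF}/T)) \subseteq L(B)$. A determinant comparison gives
\[
\det\bigl(\pi(L(B_{SNF}/T))\bigr) = |\det(TBB_{SNF}^{-1}M)| \cdot \det(L(B_{SNF}/T)) = \frac{T^n \det(B)}{\det(B_{SNF})} \cdot \frac{\det(B_{SNF})}{T^n} = \det(B),
\]
so the image is a sublattice of $L(B)$ with the same covolume and therefore equals $L(B)$. In particular, the preimage $u^* := \pi^{-1}(y^*)$ of the true closest vector $y^* \in L(B)$ lies in $L(B_{SNF}/T)$, and since its norm is comparable to $\|y^*\| = \OO(\sqrt n\, \det(B))$, the lemma yields $\|u^* - y^*\| = \OO(n^{-b})$. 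The triangle inequality then gives $\dist(v, L(B_{SNF}/T)) \le \|u^* - v\| \le \dist(v, L(B)) + \OO(n^{-b})$, and combining with the first phase,
\[
\|x_{out} - v\| \le \OO(n^{-b}) + \gamma\bigl(\dist(v, L(B)) + \OO(n^{-b})\bigr) = \gamma\,\dist(v, L(B)) + \OO(\gamma n^{-b}).
\]

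Finally, since $v, x_{out} \in \Z^n$ and $L(B)\subseteq \Z^n$, both $\|x_{out} - v\|$ and $\dist(v,L(B))$ are norms of integer vectors and so take values in $\{0\} \cup [1,\infty)$. Choosing the free parameter $b$ large enough that $\OO(\gamma n^{-b}) < 1$, the case $v \in L(B)$ forces $\|x_{out} - v\| = 0$ by discreteness, while in the case $v \notin L(B)$ we have $\dist(v,L(B)) \ge 1$ and the residual error is absorbed into $\gamma\,\dist(v,L(B))$. I expect the main technical hurdle to be the surjectivity of $\pi$ in the bijection step: without it, the actual nearest lattice point $y^*$ might have no preimage close to it, and distances would only transfer in one direction.
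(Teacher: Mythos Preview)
Your argument tracks the paper's closely: the paper also writes $\|x_{out}-v\|\le\|x_0/T-v\|+\|x_0/T-x_{out}\|$, bounds the first piece by $\tfrac{\gamma}{T}\dist(Tv,L(B_{SNF}))$, replaces this by $\gamma\,\dist(v,L(B))$ citing Lemma~\ref{lem:snf}, and then bounds the second piece by $\OO(n^{-b})$ after checking $\|x_0/T\|\le n^a\det(B)$ exactly as you do. Your covolume/bijection step is actually a refinement rather than a departure: the paper merely invokes Lemma~\ref{lem:snf} for the inequality $\tfrac{1}{T}\dist(Tv,L(B_{SNF}))\le\dist(v,L(B))$, but the lemma as stated only moves points from $L(B_{SNF}/T)$ into $L(B)$, which by itself yields the reverse inequality; your surjectivity-via-determinant argument is precisely what closes this direction.

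Two small points. The assertion that $\|u^*\|$ is ``comparable to $\|y^*\|$'' is circular as written, since you need the norm bound to invoke the lemma and the lemma to get closeness. It is easily repaired: from the proof of Lemma~\ref{lem:snf} the two bases differ entrywise by $\OO(1/T)$, whence $\pi$ and $\pi^{-1}$ have operator norm $1+o(1)$ and $\|u^*\|\le 2\|y^*\|$ follows directly. Separately, your integrality cleanup handles the case $v\in L(B)$ but not the case $v\notin L(B)$: from $\|x_{out}-v\|\le\gamma\,\dist(v,L)+\OO(\gamma n^{-b})$ and $\dist(v,L)\ge 1$ one cannot conclude $\|x_{out}-v\|\le\gamma\,\dist(v,L)$ exactly, since $\gamma\,\dist(v,L)$ need not lie near the discrete set of integer-vector norms. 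The paper does not address this either and simply drops the additive term; when the corollary is later applied (Equation~\eqref{eq:star1}) the $\OO(n^{-k})$ slack is in fact carried along, so the intended reading of the corollary is with that small additive error.
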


\begin{proof}
Denote $L = L(B), L_{SNF} = L(B_{SNF})$.
By the triangle inequality:
\be
\left\|x_{out} - v\right\| =
\left\| x_{out} - x_0/T +x_0/T - v\right\| 
\leq
\left\| x_0/T - v \right\| + \| x_0/T - x_{out}\|
\ee
Since
\be
x_0\in L_{SNF} \ \mbox{ and }
\| x_0 - T v \| \leq \gamma \cdot \dist(Tv, L_{SNF})
\ee
then together with the above we have that $x_{out}\in L$ and
\begin{align}
\|x_{out} - v\| 
&\leq 
\frac{1}{T} \left\| x_0 - Tv \right\| + \| x_0/T - x_{out}\|\\
&\leq
\frac{1}{T} \gamma \cdot \dist(Tv, L_{SNF})  + \| x_0/T - x_{out}\|\\
&\leq\label{eq:align3}
\gamma \cdot \dist(v,L) + \| x_0/T - x_{out}\|,
\end{align}
where in the last inequality we used again Lemma \ref{lem:snf}.
Since $v$ is an integer vector, we can assume that the length of $x_0$ is bounded by:
\be
\| x_0 \| \leq \gamma \|Tv\| = \gamma T \|v\|,   
\ee
hence
\be\label{eq:bound1}
\| x_0 / T \| \leq \gamma \|v \| \leq \gamma \det(B) \sqrt{n} = n^a \det(B).
\ee
By assumption $x_0\in L_{SNF}$ so $x_0/T \in L(B_{SNF}/T)$.
Hence,
by Lemma \ref{lem:snf} and together with the above equation \ref{eq:bound1} then $x_{out}\in L(B)$ and we have
\be
\left\|x_{out} - x_0/T\right\| =  \OO(n^{-b}).
\ee
Plugging this inequality into Equation \ref{eq:align3} implies the proof.

\end{proof}

\subsection{Uniform distribution on SNF-dual}

\begin{fact}\label{fact:1}
Let $L\subseteq \F_N^n$ be some SNF lattice, $\det(L) = N$, and
let ${\cal D}^*$ denote the distribution on $NL^*$ defined by sampling $x\sim D_{\F_N^n,s,c}$,
and computing the corresponding element $y\in (NL^*)_N = \Phi_3(x)$.
If $s\geq \eta_{\eps}(L)$ then $\Delta({\cal D}^*, U[(NL^*)_N]) \leq \eps$.
\end{fact}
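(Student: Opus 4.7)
The key structural fact is that the map $\Phi_3 = \Phi_2 \circ \Phi_1$ from Claim~\ref{cl:dual} depends on $x \in \F_N^n$ only through its coset in $\F_N^n / L_N$: $\Phi_1$ is defined on cosets by Proposition~\ref{prop:bij}, so any two $x, x'$ with $x - x' \in L_N$ map to the same $y \in (NL^*)_N$. Hence $\Phi_3$ descends to a well-defined map $\overline{\Phi}_3 : \F_N^n / L_N \to (NL^*)_N$, which is a genuine bijection between two sets of size $N$ (again by Proposition~\ref{prop:bij}). My plan is therefore to factor the sampling that defines ${\cal D}^*$ through this quotient: write ${\cal D}^* = \overline{\Phi}_3{}_* \pi_* D_{\F_N^n, s, c}$, where $\pi : \F_N^n \to \F_N^n / L_N$ is the quotient map.

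First, I would control the inner pushforward $\pi_* D_{\F_N^n, s, c}$ using Fact~\ref{fact:mr}. Since $s \geq \eta_\eps(L)$, that fact gives
\[
\Delta\bigl(D_{\F_N^n, s, c} \bmod {\cal P}(L),\; U({\cal P}(L))\bigr) \leq \eps.
\]
Transporting this statement through the bijection $\Phi_1$ of Proposition~\ref{prop:bij} (which identifies cosets of $L_N$ in $\F_N^n$ with their unique integer representatives in ${\cal P}(L)$) yields that the distribution of $x \bmod L_N$ is within statistical distance $\eps$ of uniform on $\F_N^n / L_N$.

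Second, I would push the bound forward under $\overline{\Phi}_3$. Since any deterministic map is non-increasing in statistical distance, and a bijection sends uniform to uniform, I obtain
\[
\Delta\bigl({\cal D}^*,\; U[(NL^*)_N]\bigr)
= \Delta\bigl(\overline{\Phi}_3{}_* \pi_* D_{\F_N^n, s, c},\; \overline{\Phi}_3{}_* U[\F_N^n / L_N]\bigr)
\leq \Delta\bigl(\pi_* D_{\F_N^n, s, c},\; U[\F_N^n / L_N]\bigr)
\leq \eps,
\]
which is the claim.

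The only point requiring care is aligning the ``continuous'' parallelotope ${\cal P}(L)$ appearing in Fact~\ref{fact:mr} with the ``discrete'' quotient $\F_N^n / L_N$ that is natural for $\Phi_3$; both parametrize the $N$ cosets of $L$ in $\Z^n$, and Proposition~\ref{prop:bij} supplies the explicit bijection between them, so I expect this to be a bookkeeping step rather than a genuine obstacle. Beyond that alignment, the proof really consists of nothing more than the bijective structure of the SNF dual map together with one invocation of the smoothing-parameter property.
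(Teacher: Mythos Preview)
Your proposal is correct and follows essentially the same route as the paper's proof: invoke Fact~\ref{fact:mr} to obtain that $D_{\F_N^n,s,c}$ reduced modulo ${\cal P}(L)$ is $\eps$-close to uniform, then push forward through the bijection $\Phi_3$ (equivalently, through $\Phi_1$ then $\Phi_2$) supplied by Proposition~\ref{prop:bij} to conclude that ${\cal D}^*$ is $\eps$-close to $U[(NL^*)_N]$. Your treatment is in fact more explicit than the paper's two-line argument, since you spell out the factorization through the quotient $\F_N^n / L_N$ and flag the discrete-versus-continuous identification that the paper leaves implicit.
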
 

\begin{proof}
By  Fact \ref{fact:mr} we have $\Delta((D_{\F_N^n,s,c} \mod {\cal P}(L), U({\cal P}(L))) \leq \eps$.
By Proposition \ref{prop:bij} $\Phi_3$ is a bijection between ${\cal P}(L)$ and $(NL^*)_N$, so therefore:
sampling $x\sim \rho_{\F_N^n,s,c}$, and computing $y = y(x)\in (NL^*)_N$ results in a distribution $D^*$
which is $\eps$ close to $U[(NL^*)_N]$.
\end{proof}

\section{Rank-$1$ $\sis$ with Prime Modulus}

In this work we will use a somewhat different variant of the Ajtai ensemble 
that arises naturally from the SNF reduction.
We define formally our variant of SIS as follows:
\begin{definition}\label{def:sis}

\textbf{Short-Integer-Solution $\sis(N,\delta)$ - homogeneous}

\noindent
Let $N$ be some prime number, and $\delta >0$ some constant.
Fix $n$ as the minimal positive integer
for which $N \leq n^{\delta n}$.
Given are $n$ numbers $g_1,\hdots, g_n \in \F_N$.
The Short-Integer-Solution problem is to find $n$ numbers $h_1,\hdots, h_n\in \F_N$, such that
$$
\sum_{i=1}^n h_i g_i = 0 \in \F_N^n,\mbox{ and } \max_{i=1}^n |h_i| \leq  2 n^{\delta}.
$$ 
\end{definition}
\noindent
Alternatively, $\sis(N)$ asks for a short vector in the lattice
$$
L = L(g)^{\perp} := \{ h\in \F_N^{n}, \ \ \langle (g_1,\hdots, g_n),h \rangle = 0\in \F_N \}.
$$

We note \cite{VV16} that one can formulate $\sis(N)$ also as a non-homogeneous congruence of the following
form
\begin{definition}\label{def:affine}

\textbf{Short-Integer-Solution $\sis(N,\delta)$ - non-homogeneous}

\noindent
Let $N$ be some prime number, and $\delta>0$ some constant.
Fix $n$ as the minimal positive integer
for which $N \leq n^{\delta n}$.
Given are $n$ numbers $g_1,\hdots, g_n \in \F_N$.
The Short-Integer-Solution problem is to find $n+1$ numbers $h_0,h_1,\hdots, h_n\in \F_N$, such that
$$
\sum_{i=1}^n h_i g_i = h_0 \in \F_N,\mbox{ and } \max_{i=0}^n |h_i| \leq  2 n^{\delta}.
$$ 
\end{definition}
Similarly to the homogeneous case we can define $\sis(N)$ as asking for short vectors in the following $n+1$-dimensional lattice:
$$
L = L(g)_1^{\perp} := \{ h\in \F^{n+1}, \ \ \langle (1,g_1,\hdots, g_n),h \rangle = 0\in \F_N \}.
$$

To see why the two problems are equivalent, note that we can implement an oracle ${\cal O}_h$ for the homogeneous
$\sis(N,\delta)$ using oracle calls to ${\cal O}_n$ for the non-homogeneous $\sis(N,\delta/2)$ as follows:
Given $g$, call ${\cal O}_n(g) = \{h_i\}_{i=0}^n$, i.e. 
$$
\sum_i h_i g_i = h_0 \modn.
$$
Then call 
$$
{\cal O}_n(h_0^{-1} g) = \{h_i'\}_{i=0}^n.
$$
Then by definition
$$
\sum_{i=1}^n h_i' h_0^{-1} g_i = h_0'
$$
so 
$$
\sum_{i=1}^n h_i' g_i = h_0 h_0' = h_0' \sum_{i=1}^n h_i g_i,
$$
which implies that 
$$
\sum_{i=1}^n (h_i' - h_0' h_i) g_i = 0 \modn,
$$ 
and since $|h_i|, |h_i'|$ are bounded by $n^{\delta/2}$ for all $i$ this implies that
$\{h_i' - h_0' h_i\}_{i=1}^n$
are bounded by $n^{\delta/2} + n^{\delta/2} n^{\delta/2} \leq 2 n^{\delta}$, hence it
 is a valid solution to ${\cal O}_h(g)$.

The significance of the non-homogeneous version is evident in the following 
equivalence of definitions:
\begin{proposition}\label{prop:equiv}
Let $B$ be an SNF matrix, with $\det(B) = N$.
Then $L(B) = L(g)_1^{\perp}$ where $g$ is the $n-1$-dimensional vector $g = (b_2,\hdots, b_n)$.
\end{proposition}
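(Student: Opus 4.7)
The plan is to verify the two set-containments by direct computation, taking advantage of the very simple structure of $B$: since all but the first row of $B$ is the identity, an explicit parameterization of $L(B)$ is immediate, and the ``parity check'' defining $L(g)_1^{\perp}$ becomes transparent. There is really only one calculation to carry out; the main thing to be careful about is a sign convention.

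For the forward inclusion $L(B) \subseteq L(g)_1^{\perp}$, I would take an arbitrary $v \in L(B)$ and write $v = Bz$ for some $z \in \Z^n$. Reading off the rows of $B$ gives $v_1 = N z_1 + \sum_{i=2}^n b_i z_i$ and $v_i = z_i$ for $i \geq 2$. Substituting the second relation into the first yields $v_1 - \sum_{i=2}^n b_i v_i = N z_1 \equiv 0 \pmod{N}$, which says exactly that $v$ lies in the codimension-$1$ $\F_N$-sublattice cut out by the single linear equation $\langle(1, -b_2, \ldots, -b_n), \cdot\rangle \equiv 0 \pmod N$, i.e.\ $v \in L(g)_1^{\perp}$ (up to the sign convention discussed below).

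For the reverse inclusion, given $h \in \Z^n$ satisfying $h_1 - \sum_{i=2}^n b_i h_i \equiv 0 \pmod N$, I would explicitly exhibit $z \in \Z^n$ with $h = Bz$ by setting $z_i := h_i$ for $i \geq 2$ and $z_1 := (h_1 - \sum_{i=2}^n b_i h_i)/N$; the parity check guarantees $z_1 \in \Z$, and then $Bz = h$ by construction. This proves $L(g)_1^{\perp} \subseteq L(B)$ and closes the equality. The only small subtlety, and the one ``hard point'' worth stating, is reconciling signs: the non-trivial entries of the primal SNF matrix $B$ are $(b_2, \ldots, b_n)$, but the vector that naturally appears in the defining linear form is $(-b_2, \ldots, -b_n) \bmod N$; since $\F_N$ has both signs, the two conventions yield the same single-constraint $\sis$ lattice, and the notation $L(g)_1^{\perp}$ with $g = (b_2, \ldots, b_n)$ is used modulo this harmless sign flip (equivalently, one may replace $g$ by $-g$ throughout Definition \ref{def:affine}).
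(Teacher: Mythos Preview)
The paper states this proposition without proof; it is treated as an elementary observation. Your argument is exactly the straightforward verification one would supply, and both inclusions are correct: writing $v=Bz$ gives $v_1-\sum_{i\ge 2}b_iv_i=Nz_1$, and conversely solving $z_1=(h_1-\sum_{i\ge 2}b_ih_i)/N$ recovers the integer coefficient vector.

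One small point on your handling of the sign issue. You are right that there is a discrepancy: the computation shows $L(B)=\{h:\ h_1-\sum_{i\ge 2}b_ih_i\equiv 0\pmod N\}$, whereas the inner-product form of Definition~\ref{def:affine} literally gives $\{h:\ h_1+\sum_{i\ge 2}b_ih_i\equiv 0\pmod N\}$. These are \emph{not} the same subset of $\Z^n$ (e.g.\ take $N=5$, $n=2$, $b_2=1$), so the phrase ``yield the same single-constraint $\sis$ lattice'' is too strong if read as equality of lattices. What is true is that the text form of Definition~\ref{def:affine} (``$\sum_i h_ig_i=h_0$'') matches your computation exactly, and the inner-product form matches it after replacing $g$ by $-g$; the paper's own definition is internally inconsistent on this sign, and either reading makes the proposition true up to that harmless relabeling. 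Your parenthetical ``one may replace $g$ by $-g$'' is the precise fix, so I would lead with that rather than asserting the lattices coincide.
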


The random - approximate SIS problem ($\rsis(N)$) is then defined by having $g_1,\hdots, g_n$ be chosen independently and uniformly at random.
We also define an approximation variant called $\sis_{\gamma}$ in which the solution
$\{h_i\}$ must satisfy $max_i |h_i| \leq n^{\delta} \gamma$.
As in previous works, one can establish the existence of a solution to the SIS problem for any input,
using the pigeonhole principle:
\begin{fact}\label{fact:sol}
For any $N,\delta$ the (homogeneous / non-homogeneous) $\sis(N,\delta)$ problem has a solution.
\end{fact}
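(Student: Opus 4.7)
The plan is a standard pigeonhole argument analogous to the existence proofs used in classical SIS hardness results. Since the only content of the fact is that a short solution \emph{exists} (not that it can be efficiently found), a counting bound on the image of the linear map $h \mapsto \sum_i h_i g_i \pmod N$ will suffice.

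First I would handle the homogeneous case. Let $M = \lceil n^{\delta} \rceil$ and consider the hypercube of integer vectors
\[
S = \{0,1,\ldots,M\}^n \subset \Z^n, \qquad |S| = (M+1)^n \geq (n^{\delta}+1)^n > n^{\delta n} \geq N,
\]
where the last inequality uses the defining property that $n$ is the minimal positive integer with $N \leq n^{\delta n}$. Define $\phi : S \to \F_N$ by $\phi(h) = \sum_{i=1}^n h_i g_i \pmod N$. Since $|S| > N = |\F_N|$, the pigeonhole principle produces distinct $h, h' \in S$ with $\phi(h) = \phi(h')$. Set $h^* := h - h' \in \Z^n \setminus \{0\}$. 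Then $\sum_{i=1}^n h_i^* g_i \equiv 0 \pmod N$ and $|h_i^*| \leq M = \lceil n^{\delta} \rceil \leq 2 n^{\delta}$, which is a valid (nontrivial) solution to the homogeneous $\sis(N,\delta)$.

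For the non-homogeneous case I would simply lift the homogeneous solution: take $h^*$ as above and set $h_0 := 0$. Then $(h_0, h_1^*, \ldots, h_n^*)$ satisfies $\sum_{i=1}^n h_i^* g_i = 0 = h_0$ with $\max_{i=0}^n |h_i^*| \leq 2n^{\delta}$, as required. Alternatively one could rerun pigeonhole on the domain $\{0,\ldots,M\}^n$ with the map $h \mapsto \sum h_i g_i$, in which case the difference of colliding vectors produces $h^*$ and one chooses $h_0 = 0$; either way reduces to the same count.

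There is no real obstacle here — the argument is purely combinatorial and does not depend on any structural property of $\F_N$ beyond $|\F_N| = N$. The only minor bookkeeping is to confirm that the slack in the bound $\lceil n^{\delta} \rceil \leq 2 n^{\delta}$ is enough to absorb the ceiling, which holds for all $n \geq 1$. In particular the fact does not require $N$ to be prime (the proof only uses $|\F_N| = N$), which is consistent with its use later as a purely existential statement justifying that the $\sis$ instance is well-posed.
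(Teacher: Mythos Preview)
Your argument is correct and is essentially the same pigeonhole proof the paper gives: count bounded integer coefficient vectors, observe that their number exceeds $N$, and take the difference of two colliding vectors to obtain a short homogeneous solution (then set $h_0=0$ for the non-homogeneous variant). Your version is in fact slightly more careful with the counting (handling the ceiling and the strict inequality $|S|>N$) than the paper's, but the idea is identical.
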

\begin{proof}
Fix some $h_0\in \F_N$.
By definition, there are at least $n^{\delta n}$ vectors of coefficients 
$(h_1,\hdots, h_{n})$
whose magnitude is at most $n^{\delta}$. Since $N\leq n^{\delta n}$ there are at least two $n$-dimensional coefficient vectors $h \neq h'$
for which $\sum_{i=1}^n h_i g_i = \sum_{i=1}^n h_i' g_i = h_0 \modn$.
Hence the vector $(h_1-h_1', \hdots, h_{n} - h_{n}')$ is 
a non-zero $n$-dimensional vector which is a valid solution.
\end{proof}
%\begin{proposition}
%
%\textbf{SNF matrix associated with an instance of $\sis(N)$}
%
%\noindent
%The lattice $\Lambda(g)^{\perp}$ is generated by the
%$(n+1)\times (n+1)$-dimensional SNF matrix whose entries $b_2,\hdots, b_{n+1}$
%are equal to $g_1,\hdots, g_n$, respectively.
%\end{proposition}

Next, we adapt an argument that appeared in \cite{AP11} which discussed
the Ajtai ensemble of lattices, for our ensemble of random lattices - $\rsis(N)$:
\begin{lemma}

\textbf{Random $\sis$ lattices are dense}

\noindent
Let $N$ be some prime number, $\delta\leq 1/2$, and consider the lattice defined by $\sis(N,\delta)$,
for some numbers $g_1,\hdots, g_n\in \F_N$:
$$
L = L(g)^{\perp} := \{ z\in \Z^{n}, \ \ \langle (g_1,\hdots, g_n),z \rangle = 0 \}.
$$
With high probability when choosing the numbers $\{g_i\}_{i=1}^n$ 
independently and uniformly at random from $\F_N$ we have that $\lambda_i(L(g)^{\perp})= \Theta(\sqrt{n})$ for all $i\in [n]$.
\end{lemma}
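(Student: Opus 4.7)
The plan is to establish the two-sided bound $\lambda_1(L(g)^\perp) = \Omega(\sqrt{n})$ and $\lambda_n(L(g)^\perp) = O(\sqrt{n})$ separately, each holding with high probability over the random choice of $g_1,\ldots,g_n\in \F_N$, and then combine them by a union bound.

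For the lower bound, the natural approach is a first-moment / union bound. For any fixed nonzero $z\in \Z^n$, since $N$ is prime and $z\not\equiv 0 \pmod{N}$, at least one coordinate $z_j$ is coprime to $N$, so $\sum_i z_i g_i \pmod{N}$ is uniformly distributed on $\F_N$ when $g$ is uniform; hence $\Pr_g[z\in L(g)^\perp]=1/N$. A standard volume estimate bounds $|\Z^n \cap \bar{B}_r(0)|$ by $(C_0 r/\sqrt{n})^n$ for $r\geq \sqrt{n}$. Choosing $r = c\sqrt{n}$ with $c$ small enough (depending on $\delta$) so that $(C_0 c)^n \ll N$, the expected number of nonzero vectors of $L(g)^\perp$ in $\bar{B}_r(0)$ is $o(1)$, and Markov's inequality yields $\lambda_1(L(g)^\perp) \geq c\sqrt{n}$ with high probability. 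The dependence on $\delta$ enters through $N\geq (n-1)^{\delta(n-1)}$ coming from the minimality of $n$ in Definition~\ref{def:sis}.

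For the upper bound on $\lambda_n$, I would combine the lower bound with Minkowski's second theorem:
\[
\prod_{i=1}^n \lambda_i(L(g)^\perp)\;\leq\; \frac{2^n N}{\vol(\bar{B}_1(0))} \;\leq\; (C_1\, n^{\delta+1/2})^n,
\]
using $\det(L(g)^\perp)=N\leq n^{\delta n}$ and the standard estimate $\vol(\bar{B}_1(0))^{-1/n} = \Theta(\sqrt{n})$. Since the lower bound gives $\lambda_i \geq c\sqrt{n}$ for every $i\leq n-1$, dividing the Minkowski product bound by $(c\sqrt{n})^{n-1}$ yields $\lambda_n = O(\sqrt{n})$, with a constant depending on $\delta$ and $c$. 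A final union bound over the two events concludes the proof.

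The main obstacle is showing that all $n$ successive minima are simultaneously $O(\sqrt{n})$ rather than just exhibiting a single short vector. A direct pigeonhole argument on $\Z^n \cap \bar{B}_{c'\sqrt{n}}(0)$ under the linear map $h\mapsto \langle g,h\rangle \pmod{N}$ would produce short vectors via collisions, but iterating it to get $n$ linearly independent short vectors requires an inductive restriction to sublattices and a control on the induced distribution on $g$ at each stage. The Minkowski-plus-lower-bound route sidesteps this entirely: once $\lambda_1$ is controlled uniformly, the Minkowski product bound automatically propagates control to the whole spectrum of successive minima.
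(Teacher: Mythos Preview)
Your two-step plan---union bound for the lower bound on $\lambda_1$, then a determinant/Minkowski argument for the upper bound on $\lambda_n$---is exactly the route the paper takes; the paper simply compresses the second step into the one unjustified sentence ``together with the fact that $\det(L)\le N\le n^{\delta n}$, it implies that $\lambda_i(L)=\Theta(n^{\delta})$.''

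The upper-bound step, however, does not go through as you have written it. Minkowski's second theorem gives
\[
\prod_{i=1}^n \lambda_i \;\leq\; \frac{2^n \det(L)}{\vol(\bar B_1(0))} \;\leq\; \bigl(C_1\, n^{1/2+\delta}\bigr)^n,
\]
while your first-moment bound gives only $\lambda_i \geq c\sqrt{n}$. Dividing, the exponent of $n$ alone is $n(\tfrac12+\delta)-\tfrac{n-1}{2}=\delta n+\tfrac12$, so you obtain $\lambda_n \leq (C_1/c)^{n-1}\,C_1\,n^{\delta n + 1/2}$, which is super-polynomial for any fixed $\delta>0$. Even if you strengthen the lower bound to $\lambda_i \geq c\,n^{1/2+\delta}$ (which the same union-bound argument does support, since $N\ge (n-1)^{\delta(n-1)}$), you are still left with $\lambda_n \le (C_1/c)^{n-1}\cdot C_1\,n^{1/2+\delta}$, and the ratio $C_1/c$ is bounded below by a constant strictly greater than $1$ (essentially the factor $2$ in Minkowski's inequality versus the threshold in the counting bound), so the resulting estimate is exponential in $n$. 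A lower bound on $\lambda_1$ together with Minkowski's second theorem cannot by itself pin down $\lambda_n$ up to constants; controlling all successive minima for this ensemble requires a separate ingredient (e.g.\ an explicit short basis as in \cite{AP11}, or an averaging argument over hyperplanes). The paper's own proof has the same gap---and its conclusion $\Theta(n^{\delta})$ does not even match the $\Theta(\sqrt{n})$ in the lemma statement---so you are not missing an idea that the paper provides.
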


\begin{proof}
By definition $n$ is the minimal positive integer for which $N\leq n^{\delta n}$.  
There exists such $n$ for which $N \geq n^{\delta (n-1)}$.
\footnote{assuming Bertrand-Chebyshev theorem, and sufficiently large $n$.}
The argument has two parts:
First, is that $\det(L) \leq N$ - this is immediate because there are at most N congruence classes of $\Z^{n} / L$ by definition.
Next, for any integer $n$ and constant $\alpha$ let: 
\be
N_{\alpha,n} = \left| \{ z\in \F_N^n, \ \ \|z\| \leq \alpha\sqrt{n} \} \right|.
\ee
By \cite{MO90} for every $\eps$, there exists an $\alpha$ such that $N_{\alpha,n} \leq \eps^n$.
Fix a vector $v\in N_{\alpha,n}$.  We have that 
$\P_{g\sim U[\F_N^{n}]} ( v\in L(g)^{\perp}) = 1/N$.
Hence by the union bound 
\be
\P
\left( \{ z\in \F_N^n, \ \ \|z\| \leq n^{\delta} \} \cap L(g)^{\perp} \neq \Phi 
\right) 
\leq 
\eps^{n} / N \leq \eps^{n} / n^{\delta(n-1)} <<1.
\ee
Together with the assumption $\delta \leq 1/2$ this implies that w.p. $1-\OO(1)$ we get $\lambda_1(L(g)^{\perp})  \geq \alpha n^{\delta}$.
This implies by definition that  $\lambda_i(L) \geq \alpha n^{\delta}$ for all $i\in [n]$.
Together with the fact that $\det(L) \leq N \leq n^{\delta n}$,
it implies that $\lambda_i(L) = \Theta(n^{\delta})$ for all $i\in [n]$.
\end{proof}

\section{Reduction from worst-case GDD to random SIS}

In this section we show
a reduction from a variant of $\cvp$ called Guaranteed-Distance-Decoding, or $\gdd$,
(see Definition \ref{def:gdd})
to random SIS.
Let ${\cal A}$ denote an algorithm for $\sis(N)$.
We define the following algorithm:
\begin{algorithm*}${\cal B}$
\item
Input: An integer matrix $B$,  and vector $v\in \Z^n$, and parameter $\delta>0$.
\begin{enumerate}
\item
Reduce $B$ via LLL, denote by $B_{LLL}$.  
\item
Decompose $B_{LLL}$ by QR decomposition, $B_{LLL} = Q \cdot R$,
set $\hat{v} = Q^{\dag} v$.
\item 
Reduce $R$ to SNF form: $\langle B_{SNF},M,T \rangle$.
Denote $L_{SNF} = L(B_{SNF})$,  $N = \det(L_{SNF})$, $L_N = L_{SNF} \cap \F_N^n$,  $NL^* = L(N \cdot B_{SNF}^{-T})$.
\item
Put 
$m$ as the minimal positive integer for which $m^{\delta m} \geq N$.
Let $\Phi$ denote the minimal number for which: $\eta_{\eps}(L) \leq \Phi$ for $\eps = m^{-5}$.
Set $s = T \cdot \Phi$.
\item
Choose $c$ uniformly at random from $\F_N \cap [- m^{1 + \delta}, m^{1 + \delta}]$.
Choose $u_{rand}\in L_N$ uniformly at random from $L_N$.
Put $v_{target} = (c^{-1}\cdot T \cdot \hat{v} + u_{rand}) \modn$.
\item
Repeat $m$ times:
\begin{enumerate}
\item
Sample $x_i = (x_{i,1},\hdots, x_{i,n})\sim D_{\F_N^n,s,v_{target}}$.
\item
Compute $y_i = \Phi_3(x_i)$.
Denote by $a_i$ the first coordinate of $y_i$.
\end{enumerate}
\item
Put $\{\alpha_i\}_{i=1}^{m} = {\cal A}(a_1,\hdots, a_m)$.
\item
If ${\cal A}$ fails or $\sum_{i=1}^m \alpha_i \neq c$ return FAIL.
\item
Compute $x_0 = \sum_{i=1}^m \alpha_i (y_i + x_i) - c u_{rand}\modn$.
\item
Return $x_{out} := Q \cdot R \cdot (B_{SNF}^{-1}M^{-1} \cdot x_0) $.
\end{enumerate}
\end{algorithm*}

The theorem below is stated so that for an oracle ${\cal A}$ that returns a correct answer w.p. $1-o(1)$ the algorithm ${\cal B}$
computes the correct answer w.p. $1/\poly(n)$.
However, by making multiple (parallel) calls to the oracle ${\cal A}$, and algorithm ${\cal B}$, the probability of success can be amplified to an arbitrary  constant, while using an oracle that returns a correct answer w.p. $1/\poly(n)$.
We note that by Lemma \ref{lem:snf}, the minimal $m$ for which $m^{\delta m} \geq N$,
where $N = \det(B_{SNF})$ is polynomial in $n$, i.e. $m = \poly(n)$.
Thus, for simplicity, in our main theorem we state the probability of success as a function of $m$
instead of $n$.
\begin{theorem}
Let $(B,v)$ be an input to $\gdd$
where $B$ is an $n\times n$ integer.
Suppose that ${\cal A}$ returns w.p. at least $1-m^{-3}$ a solution to
$\sis(N,\delta)$.  Then for $x_{out} = {\cal B}(B,v)$ we have $x_{out}\in L$ and w.p. $\Omega(m^{-2})$ we have:
$$
\|x_{out} - v\| \leq \Phi \cdot \dist(v,L) \cdot \OO(n^{1.5+\delta} \cdot \max\{n,\log\det(B)\}^{1+\delta}).
$$
\end{theorem}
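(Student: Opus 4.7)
The plan is to track the output $x_{out}$ through the algorithm, controlling the distance from $v$ at each step. At a very high level, the proof has two pieces: (i) show that the inputs presented to the oracle ${\cal A}$ are statistically close to uniformly random, so the oracle's behavior is essentially independent of the concealed target $\hat{v}$, and (ii) exploit the SNF dual structure so that, conditioned on the affine event $\sum_i \alpha_i = c$, the lattice vector $x_0$ lands near $T\hat{v}$ with error controlled by a small linear combination of discrete Gaussian samples. Finally, invoke Corollary \ref{cor:cvp} to pull the solution back to $L(B)$.

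First I would reduce to the SNF setting. Using Lemma \ref{lem:snf} on $R$, it suffices to work inside $L_{SNF}$ with target $T\hat{v}$, since the last step $x_{out} := QR(B_{SNF}^{-1}M^{-1}x_0)$ costs at most $n^{-b}$ in distance by the lemma, and $Q$ is an isometry so $\|\hat{v}-v\|=0$. I would then argue that the $a_i$'s passed to ${\cal A}$ are nearly uniform on $\F_N$: by the choice $s = T\Phi \geq T\eta_{\eps}(L)$ and Fact \ref{fact:mr}, $x_i \bmod \mathcal{P}(L_{SNF})$ is $\eps$-close to uniform; by Proposition \ref{prop:bij} and Claim \ref{cl:dual}, $\Phi_3$ is a bijection between cosets in $\F_N^n/L_N$ and points of $(NL^*)_N$; and by the explicit formula \eqref{eq:nl}, the first coordinate $a_i$ of $y_i$ determines $y_i$ uniquely, so $a_i$ is $\eps$-close to uniform on $\F_N$. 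The hiding argument then says: the addition of $u_{rand}$, uniform on $L_N$, makes $v_{target}$ uniform modulo $\mathcal{P}(L)$ regardless of the true $\hat{v}$, so the joint distribution of $(a_1,\ldots,a_m)$ is within $m\eps$ of the one obtained from a uniformly random target. Consequently ${\cal A}$'s output $\{\alpha_i\}$ is (up to $O(m\eps)$ statistical error) a function of randomness independent of $c$.

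Next I would handle the affine guess. Since ${\cal A}$ returns coefficients bounded by $2m^\delta$, the sum $\sigma := \sum_i\alpha_i$ lies in an interval of size $O(m^{1+\delta})$; since $c$ is sampled uniformly from an interval of size $\Theta(m^{1+\delta})$ and is (by the hiding argument) essentially independent of the oracle's answer, we have $\Pr[\sigma = c] = \Omega(m^{-(1+\delta)})$. Combined with the $1-m^{-3}$ success probability of ${\cal A}$, we obtain overall success $\Omega(m^{-(1+\delta)}) - O(m\eps)= \Omega(m^{-2})$ for $\eps = m^{-5}$. Conditioned on $\sigma = c$ and ${\cal A}$ succeeding (i.e.\ $\sum_i\alpha_i a_i = 0 \in \F_N$), the linearity of the map $a\mapsto y$ in \eqref{eq:nl} gives $\sum_i \alpha_i y_i = 0$, and since each $y_i + x_i$ lies in $L_N$ we get $x_0 \in L_N$. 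Writing $x_i = v_{target}+\mathcal{E}_i$ with $\mathcal{E}_i$ a centered discrete Gaussian of parameter $s$ (reduced mod $N$), a direct computation gives
\begin{equation}
x_0 \;=\; \sigma\, v_{target} - c\,u_{rand} + \sum_i \alpha_i \mathcal{E}_i \;=\; T\hat{v} + \sum_i\alpha_i\mathcal{E}_i \pmod N,
\end{equation}
because $\sigma = c$ and $c\,v_{target} = T\hat v + c\,u_{rand}$.

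Finally I would bound the error. By Fact \ref{fact:ban}, each $\|\mathcal{E}_i\| \leq s\sqrt{n}$ except with probability $2^{-\Omega(n)}$, so with the bound $|\alpha_i| \leq 2m^\delta$ and $m$ summands we obtain $\|x_0 - T\hat{v}\| \leq O(m^{1+\delta}\cdot s\sqrt{n}) = O(T\cdot\Phi\cdot n^{1/2}\cdot m^{1+\delta})$. Dividing by $T$ and invoking Corollary \ref{cor:cvp} (with $a,b$ chosen so the translation error $n^{-b}$ is negligible) yields $\|x_{out} - v\|$ bounded by the same expression times the $\dist(v,L)$ factor, once one observes that $m = \poly(n)$ is controlled by $\log\det(B_{SNF}) = O(n\log T + \log\det B)$ via Lemma \ref{lem:snf}, giving the claimed $\Phi\cdot\dist(v,L)\cdot O(n^{1.5+\delta}\max\{n,\log\det(B)\}^{1+\delta})$. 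The main obstacle is the hiding argument: one must be careful that the oracle is not adversarially correlating its answer with the hidden scaling $c^{-1}$, and this is precisely where the uniformity of $u_{rand}$ modulo $L_N$ combined with the smoothing lemma rules out any such correlation.
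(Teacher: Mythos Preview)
Your proposal is correct and follows essentially the same route as the paper: SNF reduction via Lemma~\ref{lem:snf}, uniformity of the $a_i$'s via Fact~\ref{fact:1}, the hiding argument (the paper isolates this as Fact~\ref{fact:2}), the affine guess on $c$, and the Gaussian tail bound via Fact~\ref{fact:ban}. Two small slips worth fixing. First, in the hiding step you write that $u_{rand}$ ``makes $v_{target}$ uniform modulo $\mathcal{P}(L)$''; this is backwards, since $u_{rand}\in L_N$ leaves $v_{target}\bmod\mathcal{P}(L)$ fixed at $c^{-1}T\hat v\bmod\mathcal{P}(L)$ and instead randomizes the \emph{lattice} component of $v_{target}$---which is exactly why the oracle's view reduces to $x_i\bmod\mathcal{P}(L)$, and only \emph{then} does the smoothing lemma hide $c$. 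Second, for the final pull-back the paper invokes Lemma~\ref{lem:snf} directly (its property~$(\star)$, giving $\|x_{out}-v\|\le\gamma\Phi+O(n^{-k})$ from $\|x_0-T\hat v\|\le\gamma T\Phi$), not Corollary~\ref{cor:cvp}; the corollary is stated multiplicatively in $\dist(v,L)$ and does not match the absolute bound you obtained on $\|x_0-T\hat v\|$.
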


\begin{proof}
$B_{LLL}$ is LLL-reduced, hence the matrix $R$ as a basis for $L = L(R)$ is trivially also LLL-reduced.
$R$ is thus LLL-reduced and upper-triangular which means we can invoke Lemma \ref{lem:snf} w.r.t. $R$, and derive
the tuple $\langle B_{SNF},M,T\rangle$ which has the properties of the lemma.
In particular, by Corollary \ref{cor:cvp} 
if 
\be
(\star)\ \ x_0\in L_{SNF} \ \mbox{ and }
\| x_0 - T \hat{v} \| \leq \gamma \cdot T\Phi
\ee
then
\be\label{eq:star1}
 R \cdot (B_{SNF}^{-1}M^{-1} \cdot x_0)\in L(R) \mbox{ and }
\| R \cdot (B_{SNF}^{-1}M^{-1} \cdot x_0) - \hat{v}\| 
\leq
\gamma \Phi + \OO(n^{-k}),
\ee
and since $Q$ is a unitary matrix then
\be
x_{out} \in L(B), \ \ 
\| x_{out} - v \|
=
\| R \cdot (B_{SNF}^{-1}M^{-1} \cdot x_0) - \hat{v}\|
\leq
\gamma \Phi + \OO(n^{-k}).
\ee
We will first show that when ${\cal B}$ terminates then the output vector has property $(\star)$.
Then, we will show that ${\cal B}$ succeeds with high probability.

\item
\textbf{Property $(\star)$ - quality of estimate:}

\noindent
By Claim \ref{cl:dual} we have:
\be\label{eq:inlattice}
\forall i\in [m], \ \ x_i + \Phi_3(x_i) = x_i + y_i \in L.
\ee
Since $x_i\sim D_{\F_N^n,s,v_{target}}$, for $v_{target}\in \F_N^n$ we can write 
\be\label{eq:expand}
x_i = c^{-1} T \hat{v} + u_{rand} + {\cal E}_i,\ \ 
{\cal E}_i \sim D_{\F_N^n,s,0}.
\ee
Since we choose $m$ such that $N\leq \sqrt{m}^{m}$ then by Fact \ref{fact:sol} there exists a short solution, i.e. whenever ${\cal A}$
succeeds it returns
a set of coefficients $\alpha_1,\hdots, \alpha_m$, $|\alpha_i| \leq \beta\sqrt{m}$, 
for some constant $\beta$, and in addition 
$$
\sum_{i=1}^m \alpha_i a_i = 0.
$$
Assume from now on that this is the case.
Using again Equation \ref{eq:nl} implies that
\be\label{eq:aout}
\sum_{i=1}^m \alpha_i y_i = 0 \in \F_N, \ \ |\alpha_i| \leq \beta m^{\delta}, \ \forall \ 1\leq i\leq n
\ee
Now, assume that $\sum_{i=1}^m \alpha_i = c$, in which case the algorithm ${\cal B}$ 
declares success and consider the vector: 
\be
x_0 := 
\sum_{i=1}^m \alpha_i (x_i + y_i)- c u_{rand} .
\ee
On one hand: by definition $x_0 \in L_{SNF}$, because the $\alpha_i$'s and $c$ are integers, $u_{rand}\in L_N$, and by Equation \ref{eq:inlattice} $x_i + y_i\in L_N$ for all $i$.
On the other hand by Equation \ref{eq:expand}
\be\label{eq:1}
T \hat{v} - x_0
=
T \hat{v} - 
\sum_{i=1}^m \alpha_i y_i - 
\sum_{i=1}^m \alpha_i (c^{-1} T \hat{v}+u_{rand}) - 
\sum_{i=1}^m \alpha_i {\cal E}_i + c u_{rand}
\ee
In addition, by the assumption that $\sum_{i=1}^m \alpha_i = c$ we get:
\be
\sum_{i=1}^m \alpha_i c^{-1} T \hat{v} = 
T \hat{v} c^{-1} \cdot \sum_{i=1}^m \alpha_i = T \hat{v} \cdot c^{-1} \cdot c = T \hat{v} ,
\ee
and similarly 
\be
\sum_{i=1}^m \alpha_i u_{rand} = c u_{rand}.
\ee
Hence
\be
T \hat{v} - x_0 = \sum_i \alpha_i {\cal E}_i
\ee
By Equation \ref{eq:expand} ${\cal E}_i\sim D_{\F_N^n, s,0}$ for all $i$, so by Fact \ref{fact:ban}
\be
\forall i, \ \ 
\P( \|{\cal E}_i\| \leq s \sqrt{n}) = 1-2^{-\Omega(n)},
\ee
and so by the union bound:
\be
\P( \forall i, \ \|{\cal E}_i\| \leq s \sqrt{n}) = 1- 2^{-\Omega(n)} = 1 - o(1).
\ee
Next, by Equation \ref{eq:aout} the random variables $\alpha_i$ are bounded by $m^{\delta}$ hence
\be
\P( \|T \hat{v} - x_0 \| \leq s \cdot m^{1+\delta} \cdot \sqrt{n}) = 1 - o(1).
\ee 
We note that in the above we assume a worst-case scenario where
the summation $\sum_i \alpha_i {\cal E}_i$ is coherent, i.e. the $\alpha_i$'s
are completely correlated with the ${\cal E}_i$'s.
Together with our choice $s = T \Phi$ this implies that
\be
\P( \| T \hat{v} - x_0 \| \leq T \phi m^{1+\delta} \sqrt{n}) = 1 - o(1).
\ee
%By our choice of parameters $s = \Phi \cdot T$, and since $L$ is an integer matrix this implies that:
%\be
%s \leq \Phi \cdot \dist(Tv,L_{SNF}) \leq T \Phi \cdot \dist(v,L),
%\ee 
%where the last inequality follows again from Lemma \ref{lem:snf}.
Since  $m$ is the minimal integer for which
$m^{\delta m} \geq N$ 
then by Bertrand-Chebyshev theorem for sufficiently large $m$ we have:
\be
m^{\delta (m-1)}\leq N.
\ee
In addition, by Lemma \ref{lem:snf} we have $T\leq \max\{ \det(B), 2^{\OO(n)} \}$ and 
$N = \OO(\det(B) T^n)$.
Hence $m = \OO(n \max\{n,\log\det(B)\})$.
Thus, w.p. $1 - o(1)$ we have 
\be
\| T \hat{v} - x_0\| = \OO( n^{1.5+\delta} T \Phi \max\{n,\log\det(B)\}^{1+\delta}),
\ee
in which case by Equation \ref{eq:star1}
\be\label{eq:out}
\| v - x_{out} \| = \Phi \cdot \OO( n^{1.5+\delta} \max\{n,\log\det(B)\}^{1+\delta}) .
\ee

\item
\textbf{Probability of success:}

\noindent
Now we would like to lower-bound the probability of the algorithm succeeding.
By our choice of $s = T \cdot \eta_{\eps}(L)$, 
and Lemma \ref{lem:snf} we have: $s \geq \eta_{\eps}(L_{SNF})$.
Hence, by Fact \ref{fact:1} we have that 
\be
\Delta(y_i , U[(NL^*)_N]) \leq \eps.
\ee
By equation \ref{eq:nl} this implies that $\Delta(a_i, U[N]) \leq \eps$,
and since the $a_i$'s are independent then
\be
\Delta((a_1,\hdots,a_m), U[N^m]) \leq 1 - (1-\eps)^m \leq m^{-3},
\ee
where the last inequality follows from our choice of $\eps = m^{-5}$.
By assumption, algorithm ${\cal A}$ for $\sis(N)$ returns a correct answer w.p at least $1- m^{-3}$, for 
a uniformly random vector $(a_1,\hdots, a_m)$.
Therefore
\be
\P({\cal A} \mbox{ succeeds } ) \geq 1 - m^{-3} -  m^{-3} = 1-\OO(m^{-3}). 
\ee
Next, we would like to make sure that ${\cal A}$ is not "adversarial", and given $c$, returns
$\alpha_1,\hdots, \alpha_m$ whose sum is different than $c$ almost all the time.
Using Fact \ref{fact:2}, for any $c_1,c_2\in [N]$ we have:
\be\label{eq:ratio}
\Delta \left( \P(\alpha_1,\hdots,\alpha_m | c_1) ,  \P(\alpha_1,\hdots,\alpha_m | c_2) \right) \leq m^{-3}.
\ee
Hence, given a choice of $c$ by the algorithm,
the probability that the same $c$ is going to be the value of $\sum_i \alpha_i$ is lower-bounded by:
\be
\P\left( \sum_{i=1}^m \alpha_i = c | c\right) \geq \max_{c'} \P\left(\sum_i \alpha_i = c'\right) - m^{-3} = \Omega\left(\frac{1}{m^{1 + \delta}}\right),
\ee
where the inequality follows because
$|\alpha_i| \leq 2m^{\delta}$ for each $i$, and so $\left|\sum_i \alpha_i \right| \leq 2m^{1 + \delta} $.
By the union bound we thus have:
\be
\P({\cal B} \mbox{ succeeds }) =
\P \left({\cal A} \mbox{ succeeds } \wedge \sum_{i=1}^m  \alpha_i= c\right)
=
\Omega(m^{-1-\delta}) - O(m^{-3})= \Omega(m^{-2}).
\ee
\end{proof}

\begin{fact}\label{fact:2}
For any $c_1,c_2\in [N]$:
$$
\Delta
\left(
\P(\alpha_1,\hdots,\alpha_m | c_1), 
\P(\alpha_1,\hdots,\alpha_m | c_2)
\right) 
\leq m^{-3}.
$$
\end{fact}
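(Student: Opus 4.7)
The plan is to reduce the claim to a statement about the oracle's \emph{inputs} rather than its outputs. The tuple $(\alpha_1,\hdots,\alpha_m)$ is produced by ${\cal A}$ applied to $(a_1,\hdots,a_m)$ together with internal randomness that is independent of $c$, so by the data-processing (post-processing) property of statistical distance it suffices to establish
$$
\Delta\bigl(\P((a_1,\hdots,a_m) \mid c_1),\ \P((a_1,\hdots,a_m) \mid c_2)\bigr) \leq m^{-3}.
$$
I will prove this by showing that the conditional $a$-distribution is close to uniform on $\F_N^m$ for every value of $c$, and then invoking the triangle inequality.

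To that end, first condition further on the internal randomness $u_{rand}$, so that $v_{target} = c^{-1} T \hat v + u_{rand}$ is determined. Given $v_{target}$, the samples $x_1,\hdots,x_m$ are i.i.d.\ from $D_{\F_N^n,s,v_{target}}$, and hence so are the $a_i = (\Phi_3(x_i))_1$. The choice $s = T \cdot \Phi$, together with Lemma \ref{lem:snf} and the definition of $\Phi$, ensures $s \geq \eta_\eps(L_{SNF})$ for $\eps = m^{-5}$. Fact \ref{fact:1} then gives $\Delta(\P(a_i \mid v_{target}),\ U[\F_N]) \leq \eps$, a bound that is crucially uniform in the center $v_{target}$. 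A standard product-distribution hybrid argument upgrades this marginal bound to
$$
\Delta\bigl(\P((a_1,\hdots,a_m) \mid v_{target}),\ U[\F_N^m]\bigr) \leq m\eps.
$$

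Averaging this bound over $u_{rand}$ (holding $c$ fixed) preserves it, so the joint $a$-distribution conditioned on $c$ alone is also within $m\eps = m^{-4}$ of $U[\F_N^m]$, uniformly in $c$. A single triangle inequality between $c=c_1$ and $c=c_2$ then yields the claimed bound $2m^{-4} \leq m^{-3}$ for $m \geq 2$.

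The one delicate point, which I expect to be the main obstacle in writing this up carefully, is checking that the closeness-to-uniform bound delivered by Fact \ref{fact:1} is genuinely uniform in the Gaussian center $v_{target}$: this uniformity is exactly what decouples the oracle's view from the specific value of $c$ and makes the averaging over $u_{rand}$ harmless. Once this is in hand, the rest is the data-processing inequality together with a routine product-measure hybrid.
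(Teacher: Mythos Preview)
Your argument is correct, and it follows the same high-level template as the paper's proof: apply the data-processing inequality to pass from the oracle's outputs to its inputs, use the smoothing-parameter bound to show the input distribution is close to uniform independently of $c$, take a product/hybrid over the $m$ coordinates, and finish with a triangle inequality. The quantitative bound you obtain, $2m\eps = 2m^{-4} \leq m^{-3}$, matches the paper's $1-(1-2\eps)^m = O(m^{-4})$.

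There is one genuine organizational difference worth noting. You reduce directly to the tuple $(a_1,\hdots,a_m)$ actually fed to ${\cal A}$, and then invoke Fact~\ref{fact:1} (which already packages the passage $x_i \mapsto \Phi_3(x_i)$). As a consequence, $u_{rand}$ plays no essential role in your argument beyond being averaged away by convexity. The paper instead reduces first to $(x_1,\hdots,x_m)$, and then uses the uniformity of $u_{rand}$ over $L_N$ to argue that only $x_i \bmod {\cal P}(L)$ carries information about $c$, after which Fact~\ref{fact:mr} is applied directly. Your route is slightly more streamlined and makes explicit that the smoothing parameter alone already hides $c$ from the oracle; the paper's route makes more visible where $u_{rand}$ would be needed if the oracle saw the full $x_i$'s rather than just the $a_i$'s. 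Both are valid, and your concern about uniformity of Fact~\ref{fact:1} in the center $v_{target}$ is well-placed but easily dispatched: Fact~\ref{fact:1} inherits this uniformity from Fact~\ref{fact:mr}, which is stated for all $c\in\F_N^n$.
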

\begin{proof}
Since $y_i$ is uniquely determined from $x_i$ for each $i$, and $(\alpha_1,\hdots,\alpha_m)$ is determined (possibly probabilistically) from the pairs 
$(y_i,x_i)$ for all $i\in [n]$ then it is sufficient to show that for any $c_1\neq c_2$, we have
\be
\Delta
\left(
\P(x_1,\hdots,x_m | c_1), 
\P(x_1,\hdots,x_m | c_2)
\right) 
\leq m^{-3}.
\ee
By definition, we have:
\be
\P(x_1,\hdots, x_m | c) \propto \prod_{i=1}^m e^{-\pi \|x_i - v c - u_{rand}\| / s^2}.
\ee
Since $u_{rand}$ is uniform on $L_N$, 
and chosen independently of $c$, then it is sufficient to consider the a-posteriori probabilities
of the vectors $x_i$ modulo ${\cal P}(L)$ for each $i$:
\be
\P(x_1^L ,\hdots,x_m^L  | c), \ \ x_i^L := x_i\  \mod {\cal P}(L).
\ee
Finally, since the $x_i$'s are independent given $c$ then:
\be
\P(x_1^L,\hdots,x_m^L | c)
=
\prod_{i=1}^m \P( x_i^L |c)
\ee
Hence, for $c_1\neq c_2$ we have:
\be\label{eq:allx}
\Delta
\left(
\P(x_1,\hdots,x_m | c_1), 
\P(x_1,\hdots,x_m | c_2)
\right) 
=
\Delta \left(
\prod_{i=1}^m \P( x_i^L |c_1) ,
\prod_{i=1}^m \P( x_i^L |c_2)
\right).
\ee
Since $s \geq \eta_{\eps}(L_{SNF})$ then by Fact \ref{fact:mr} and the triangle inequality we have:
\be
\forall i\in [m] \ \ 
\Delta \left( \P( x_i^L |c_1), \P( x_i^L |c_2) \right) \leq 2\eps
\ee
which implies by Equation \ref{eq:allx}
\be
\Delta
\left(
\P(x_1,\hdots,x_m | c_1), 
\P(x_1,\hdots,x_m | c_2)
\right) 
\leq
1 - (1-2\eps)^m
= O(m^{-4}),
\ee
where the last equality follows from our choice $\eps = m^{-5}$.
\end{proof}

\noindent
We note that by a simple modification of the algorithm above one can obtain a reduction from approximate $\sivp$ to $\rsis(N)$:
\begin{theorem}
Given is an integer matrix $B$.
Let $\Phi$ denote the number defined by: $\eta_{\eps}(L) \leq \Phi$ for $\eps = m^{-5}$.
Suppose that ${\cal A}$ returns w.p. at least $1-m^{-3}$ a solution to
$\rsis(N,\delta)$.  
There exists an algorithm ${\cal B}$ such that ${\cal B}^{\cal A}$ returns w.p. $1 - o(1)$ 
a vector $x_{out}\in L$, such that
$$
\|x_{out} \| \leq \Phi \cdot \OO(n^{1.5+\delta} \max\{n,\log\det(B)\}^{1+\delta}).
$$
\end{theorem}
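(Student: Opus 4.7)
The plan is to specialize the GDD algorithm to the target-free setting $v = 0$. Since no target is ever presented to the SIS oracle, the scalar $c$ and random offset $u_{\rm rand}$ used previously---whose sole purpose was to defeat adversarial oracle behavior conditioned on the revealed target (Fact \ref{fact:2})---can simply be omitted. Concretely, after reducing $B$ via LLL and QR as in the GDD algorithm and applying Lemma \ref{lem:snf} to produce $\langle B_{\rm SNF}, M, T \rangle$, we set $m$ to be the minimal integer with $m^{\delta m} \geq N$, set $s = T\Phi$, sample $x_i \sim D_{\F_N^n, s, 0}$ for $i = 1, \ldots, m$, compute $y_i = \Phi_3(x_i)$ with first coordinate $a_i$, query $\{\alpha_i\} = {\cal A}(a_1, \ldots, a_m)$, form $x_0 := \sum_i \alpha_i (x_i + y_i)$, and return $x_{\rm out} := Q \cdot R \cdot B_{\rm SNF}^{-1} M \cdot x_0$.

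The correctness argument parallels the GDD proof with $\hat v = 0$. By Claim \ref{cl:dual} each $x_i + y_i \in L_N$, so $x_0 \in L_{\rm SNF}$. The oracle's constraint $\sum_i \alpha_i a_i = 0$ combined with the scaled-dual structure \eqref{eq:nl} yields $\sum_i \alpha_i y_i = 0 \in \F_N^n$, hence $x_0 \equiv \sum_i \alpha_i x_i \pmod N$. Using the coefficient bound $|\alpha_i| \leq 2 m^\delta$ from Definition \ref{def:sis}, the Gaussian tail bound $\|x_i\| \leq s\sqrt{n}$ from Fact \ref{fact:ban} (holding for all $i$ simultaneously with probability $1 - 2^{-\Omega(n)}$), and $s = T\Phi$, we obtain $\|x_0\| \leq 2 m^{1+\delta} T\Phi\sqrt{n}$. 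Applying the SNF translation of Corollary \ref{cor:cvp} and invoking $m = \OO(n \cdot \max\{n, \log\det(B)\})$ (as derived in the GDD proof) then gives the advertised bound $\|x_{\rm out}\| \leq \Phi \cdot \OO(n^{1.5+\delta} \max\{n, \log\det(B)\}^{1+\delta})$.

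For the success probability, Fact \ref{fact:1} shows that each $a_i$ is $\eps$-close in statistical distance to uniform on $\F_N$ for $\eps = m^{-5}$, so by independence the tuple $(a_1, \ldots, a_m)$ is $m\eps = m^{-4}$-close to uniform on $\F_N^m$; combined with the oracle's assumed success rate $1 - m^{-3}$, the oracle returns a valid solution with probability $1 - \OO(m^{-3})$. The main obstacle not present in the GDD reduction is to ensure $x_{\rm out} \neq 0$, equivalently $x_0 \neq 0 \in L_N$. By definition a valid SIS output is a non-trivial tuple $(\alpha_i)$, and conditional on the $a_i$'s (hence on the oracle's possibly randomized output) each $x_i$ remains a discrete Gaussian of width $s \geq \eta_\eps(L_{\rm SNF})$ restricted to its $L_N$-coset; an anti-concentration argument then bounds the probability that the random linear combination $\sum_i \alpha_i x_i$ collapses to $0 \pmod N$ by the maximum density of the induced (scaled, convolved) Gaussian, which is at most $(s\sqrt{2\pi})^{-n} = 2^{-\Omega(n)}$. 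Combining these estimates, ${\cal B}$ succeeds with probability $1 - o(1)$, establishing the theorem.
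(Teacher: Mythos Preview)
Your approach is essentially the paper's: specialize the GDD algorithm to the target $v=0$, drop $c$ and $u_{\rm rand}$, and reuse the length analysis verbatim. The paper's own proof is two sentences---set $v_{\rm target}=u_{\rm rand}=v=0$, accept whenever ${\cal A}$ succeeds, and cite Equation~\eqref{eq:out}---so you are doing exactly what they do, only spelled out more carefully.

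One point you address that the paper glosses over is non-triviality of $x_{\rm out}$ (without which the theorem is vacuous). Your anti-concentration idea is right in spirit, but the stated bound $(s\sqrt{2\pi})^{-n}$ is the mode of a \emph{continuous} Gaussian; once you condition on the coset $\Phi_3(x_i)=y_i$, the normalizing mass on that coset is $\rho_s(c_i+L_{\rm SNF})\approx s^n/N$, so the max point-probability is $\approx N/s^n$, not $s^{-n}$. A clean fix uses the SNF structure directly: conditioning on $y_i$ imposes only a single $\F_N$-linear constraint on $x_i$, so the last $n-1$ coordinates of $x_i$ remain an (approximate) discrete Gaussian on $\Z^{n-1}$ of width $s$, with max point-probability $\approx s^{-(n-1)}$; since $s=T\Phi\ge \eta_\eps(L)\ge \sqrt{\ln(1/\eps)/\pi}\,/\,\lambda_1(L^*)\ge \Omega(\sqrt{\log m})$ for an integer lattice, this is indeed $2^{-\Omega(n)}$. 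The paper, by contrast, simply asserts that $n$ independent samples of $x_{\rm out}$ are linearly independent w.h.p.\ without justification, so your treatment here is already more thorough than theirs.
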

\begin{proof}
We slightly modify algorithm ${\cal B}$ as follows: we set $v_{target} = u_{rand} = v= 0$ in step (3).
In Step (6) we terminate successfully whenever ${\cal A}$ succeeds (i.e. regardless of the sum of coefficients).
Similar to Equation \ref{eq:out} we have:
\be
x_{out}\in L, 
\P( \|x_{out}\| \leq \Phi \cdot \OO( n^{1.5+\delta} \max\{n,\log\det(B)\}^{1+\delta})) = 1 -o(1).
\ee
Hence, sampling $n$ such vectors $x_{out}$ independently returns w.h.p. a set of $n$ linearly-independent vectors,
with the desired approximation ratio.
\end{proof}

\section{Acknowledgements}
The authors are thankful to 
Oded Regev and Vinod Vaikuntanathan for their important insights regarding $\sis$ and to
the following people for useful discussion and insightful comments:  
Shafi Goldwasser, Sean Hallgren, David Jerison,
Daniele Micciancio, Noah Stephens-Davidowitz.
LE is thankful to the Templeton foundation for their support of this work.
PWS was supported by the US Army Research Laboratory's Army Research Office through grant number W911NF-12-1-0486, the NSF through grant number CCF-121-8176, and by the NSF through the STC for Science of
Information under grant number CCF0-939370.

\end{document}